\documentclass[acmsmall,screen]{acmart}
\usepackage{braket}
\usepackage{fancyhdr}
\usepackage{booktabs}

\usepackage[T1]{fontenc}
\usepackage[utf8]{inputenc}
\usepackage{regexpatch}
\usepackage{letltxmacro}
\usepackage{aligned-overset}

\newcommand{\code}[1]{{\small\texttt{#1}}}

\usepackage{outlines}

\usepackage{alltt}

\usepackage{listings,lstcoq, lstegglog}

\usepackage{hyperref}
\usepackage[noabbrev,nameinlink,capitalize]{cleveref} 

\usepackage{enumitem}

\usepackage{xspace}

\usepackage{nowidow}
\usepackage{microtype}

\usepackage{wrapfig}
\usepackage{float}
\usepackage{caption}
\usepackage{subcaption}

\usepackage{threeparttable}

\newcommand{\revemptyset}{\ensuremath{\text{\reflectbox{$\emptyset$}}}}
\newcommand{\emdash}{\---}
\usepackage{xcolor}
\definecolor{ltblue}{rgb}{0,0.4,0.4}
\definecolor{dkblue}{rgb}{0,0.1,0.6}
\definecolor{dkgreen}{rgb}{0,0.35,0}
\definecolor{dkviolet}{rgb}{0.3,0,0.5}
\definecolor{dkred}{rgb}{0.5,0,0}
\definecolor{dkorange}{HTML}{e87954}
\definecolor{dkyellow}{HTML}{f7cd7a}
\definecolor{dkpink}{HTML}{d152cf}

\newcommand{\vyzx}{\textsl{Vy}\textsc{ZX}\xspace}

\newcommand{\qlib}{\texttt{QuantumLib}\xspace}

\newcommand{\zxcalc}{ZX-calculus\xspace}
\newcommand{\zxdiag}{ZX-diagram\xspace}
\newcommand{\zxdiags}{\zxdiag{}s\xspace}

\newcommand{\tensrocq}{TensorRocq\xspace}

\newcommand{\C}{\mathbb{C}}

\usepackage{etoolbox} 
\newtoggle{comments}
\toggletrue{comments}

\iftoggle{comments}{
  \newcommand{\fixme}[1]{\textbf{\textcolor{red}{[ Fixme: #1]}}}
  \newcommand{\todo}[1]{\textbf{\textcolor{green}{[ TODO: #1 ]}}}
  \newcommand{\rnr}[1]{\textbf{\textcolor{blue}{[ Robert: #1 ]}}}
  \newcommand{\ben}[1]{\textbf{\textcolor{orange}{[ Ben: #1 ]}}}
  \newcommand{\wjbs}[1]{\textbf{\textcolor{purple}{[ William: #1 ]}}}
  \newcommand{\wjbscomment}[1]{} 
}{
  \newcommand{\fixme}[1]{}
  \newcommand{\todo}[1]{}
  \newcommand{\rnr}[1]{}
  \newcommand{\ben}[1]{}
  \newcommand{\wjbs}[1]{}
  \newcommand{\wjbscomment}[1]{}
}

\usepackage{tikz}

\usetikzlibrary{backgrounds}
\usetikzlibrary{arrows}
\usetikzlibrary{shapes,shapes.geometric,shapes.misc}
\usetikzlibrary{decorations.pathmorphing}
\usetikzlibrary{decorations.pathreplacing}
\usetikzlibrary{decorations.markings}

\tikzstyle{every picture}=[baseline=-0.25em,scale=0.5]

\pgfkeys{/tikz/tikzit fill/.initial=0}
\pgfkeys{/tikz/tikzit draw/.initial=0}
\pgfkeys{/tikz/tikzit shape/.initial=0}
\pgfkeys{/tikz/tikzit category/.initial=0}

\newcommand{\tikzfig}[1]{%
\IfFileExists{#1.tikz}
  {\input{#1.tikz}}
  {%
    \IfFileExists{./figures/#1.tikz}
      {\input{./figures/#1.tikz}}
      {\tikz[baseline=-0.5em]{\node[draw=red,font=\color{red},fill=red!10!white] {\textit{#1}};}}%
  }%
}
\newcommand{\ctikzfig}[1]{%
\begin{center}\rm
  \input{#1.tikz}
\end{center}}

\pgfdeclarelayer{edgelayer}
\pgfdeclarelayer{nodelayer}
\pgfsetlayers{background,edgelayer,nodelayer,main}
\tikzstyle{none}=[inner sep=0mm]
\tikzstyle{every loop}=[]
\tikzstyle{mark coordinate}=[inner sep=0pt,outer sep=0pt,minimum size=3pt,fill=black,circle]
\input{zh.tikzdefs}

\tikzstyle{dot}=[inner sep=0.3mm, minimum width=2mm, minimum height=2mm, draw, shape=circle, font={\footnotesize}, tikzit fill=magenta, fill=white]
\tikzstyle{white dot}=[dot, fill={rgb,255: red,232; green,165; blue,165}, text depth=-0.2mm, tikzit category=ZH-pf, draw=black]
\tikzstyle{white phase dot}=[minimum size=5mm, font={\footnotesize\boldmath}, shape=rectangle, rounded corners=2mm, inner sep=0.2mm, outer sep=-2mm, scale=0.8, tikzit shape=circle, draw=black, fill={rgb,255: red,232; green,165; blue,165}, tikzit category=ZH-pf, tikzit draw=blue]
\tikzstyle{gray dot}=[dot, fill={rgb,255: red,216; green,248; blue,216}, text depth=-0.2mm, tikzit category=ZH-pf]
\tikzstyle{gray phase dot}=[white phase dot, tikzit shape=circle, tikzit draw=blue, fill={rgb,255: red,216; green,248; blue,216}, font={\footnotesize\boldmath}]
\tikzstyle{hadamard}=[fill=white, draw, inner sep=0.6mm, minimum height=1.5mm, minimum width=1.5mm, shape=rectangle, tikzit shape=rectangle, tikzit category=ZH-pf]
\tikzstyle{small hadamard}=[hadamard]
\tikzstyle{lambda}=[hadamard, fill={rgb,255: red,180; green,180; blue,180}, tikzit shape=rectangle]
\tikzstyle{halfscalar}=[star, fill=black, draw=black, minimum size=8pt, inner sep=0pt]
\tikzstyle{box}=[shape=rectangle, text height=1.5ex, text depth=0.25ex, fill=white, draw=black, minimum height=3mm, minimum width=5mm, font={\small}]
\tikzstyle{Z dot}=[inner sep=0mm, minimum size=2mm, shape=circle, draw=black, fill={zx_green}, tikzit fill=green]
\tikzstyle{Z phase dot}=[minimum size=5mm, font={\footnotesize\boldmath}, shape=rectangle, rounded corners=2mm, inner sep=0.2mm, outer sep=-2mm, scale=0.8, tikzit shape=circle, draw=black, fill={zx_green}, tikzit draw=blue, tikzit fill=green]
\tikzstyle{X dot}=[Z dot, shape=circle, draw=black, fill={zx_red}, tikzit fill=red]
\tikzstyle{X phase dot}=[Z phase dot, tikzit shape=circle, tikzit draw=blue, fill={zx_red}, font={\footnotesize\color{black}\boldmath}, tikzit fill=red]
\tikzstyle{H box}=[hadamard]
\tikzstyle{st}=[star, star points=5, fill=white, draw=black, inner sep=1.2pt, line width=1.2pt, tikzit fill=blue, tikzit draw=red, tikzit category=ZH-pf]
\tikzstyle{triangle}=[regular polygon, regular polygon sides=3, fill=white, draw=black, inner sep=0pt, minimum width=1em, tikzit draw=blue, tikzit category=ZH-pf, tikzit fill=cyan]
\tikzstyle{not}=[fill={rgb,255: red,180; green,180; blue,180}, draw=black, shape=circle, font={$\neg$}, dot]
\tikzstyle{vertex}=[inner sep=0mm, minimum size=1mm, shape=circle, draw=black, fill=black]
\tikzstyle{vertex set}=[inner sep=0mm, minimum size=1mm, shape=circle, draw=black, fill=white, font={\footnotesize\boldmath}]
\tikzstyle{wide point}=[fill=white, draw, shape=isosceles triangle, shape border rotate=-90, isosceles triangle stretches=true, inner sep=0pt, minimum width=1.5cm, minimum height=6.12mm, yshift=-0.0mm]
\tikzstyle{medium gray box}=[semilarge box, fill={rgb,255: red,180; green,180; blue,180}]
\tikzstyle{small box}=[rectangle, inline text, fill=white, draw, minimum height=5mm, yshift=-0.5mm, minimum width=5mm, font={\small}]
\tikzstyle{small gray box}=[small box, fill={rgb,255: red,180; green,180; blue,180}]
\tikzstyle{medium box}=[rectangle, inline text, fill=white, draw, minimum height=5mm, yshift=-0.5mm, minimum width=8mm, font={\small}]
\tikzstyle{ddot}=[line width=1.6pt, inner sep=0mm, minimum width=2.5mm, minimum height=2.5mm, draw, shape=circle]
\tikzstyle{dd white}=[ddot, fill=white, tikzit draw=green]
\tikzstyle{dd white phase}=[white phase dot, line width=1.6pt, tikzit draw=yellow]
\tikzstyle{dd gray}=[ddot, fill={rgb,255: red,180; green,180; blue,180}, tikzit draw=green]
\tikzstyle{dd gray phase}=[gray phase dot, line width=1.6pt, tikzit draw=yellow]
\tikzstyle{cnotbot}=[fill=white, draw=black, shape=circle]
\tikzstyle{new edge style 0}=[->-]
\tikzstyle{Inner arrow}=[->-]
\tikzstyle{State Prep}=[fill=white, draw=black, shape=rounded rectangle, rounded rectangle east arc=0pt, tikzit shape=rectangle, tikzit category=Measurement, tikzit fill={rgb,255: red,247; green,0; blue,255}]
\tikzstyle{Measure}=[fill=white, draw=black, shape=rounded rectangle, rounded rectangle west arc=0pt, tikzit category=Measurement, tikzit fill=cyan, tikzit shape=rectangle]
\tikzstyle{Gate}=[fill=white, draw=black, shape=rectangle, tikzit category=Measurement, tikzit shape=rectangle]
\tikzstyle{Vertex Highlight}=[fill=orange, draw=orange, shape=circle]
\tikzstyle{medium box}=[fill=white, draw=black, shape=rectangle, minimum width=0.75cm, minimum height=1cm]
\tikzstyle{rotated text}=[rotate=270]

\tikzstyle{simple}=[-, fill=none]
\tikzstyle{hadamard edge}=[-, dashed, dash pattern=on 2pt off 1pt, thick, draw=blue]
\tikzstyle{gray}=[-, draw={blue!60!white}, tikzit draw=blue]
\tikzstyle{blue}=[-, draw={blue!60!white}, tikzit draw=blue]
\tikzstyle{brace edge}=[-, tikzit draw=blue, decorate, decoration={brace,amplitude=1mm,raise=-1mm}]
\tikzstyle{diredge}=[->]
\tikzstyle{not edge}=[-, dashed, dash pattern=on 2pt off 1.5pt, thick, draw={rgb,255: red,255; green,68; blue,68}]
\tikzstyle{double edge}=[-, double, shorten <=-1mm, shorten >=-1mm, double distance=2pt]
\tikzstyle{boldedge}=[-, line width=1.6pt, shorten <=-0.17mm, shorten >=-0.17mm, tikzit draw=blue]
\tikzstyle{separatediagrams}=[-, dashed, dash pattern=on 2pt off 1.5pt, thick, draw=black]
\tikzstyle{x stabilizer}=[-, double=black, draw={zx_red}, line width=1pt]
\tikzstyle{z stabilizer}=[-, double=black, draw={zx_green}, line width=1 pt]
\tikzstyle{zx stabilizer}=[-, double=black, draw={zx_yellow}, line width=1 pt]
\tikzstyle{measurement}=[-, tikzit fill=cyan, double]
\tikzstyle{dashed diredge}=[->, dashed, dash pattern=on 2pt off 1.5pt]

\newcommand{\pngfig}[2]{%
    \IfFileExists{#1.pdf}{\includegraphics[width=#2]{#1.pdf}}
    {
        \IfFileExists{./figures/#1.pdf}{\includegraphics[width=#2]{./figures/#1.pdf}}
        {
            \IfFileExists{#1.png}
              {\PackageWarning{pngfig}{#1.png is a PNG; slows down compile time; consider converting to pdf}
                \includegraphics[width=#2]{#1.png}}
              {
                \IfFileExists{./figures/#1.png}
                  {\PackageWarning{pngfig}{./figures/#1.png is a PNG; slows down compile time; consider converting to pdf}
                  \includegraphics[width=#2]{./figures/#1.png}}
                  {\tikz[baseline=-0.5em]{\node[draw=red,font=\color{red},fill=red!10!white] {\textit{#1}};}}%
              }
        }
    }
}

\newcommand{\pnghfig}[2]{%
    \IfFileExists{#1.pdf}{\includegraphics[height=#2]{#1.pdf}}
    {
        \IfFileExists{./figures/#1.pdf}{\includegraphics[height=#2]{./figures/#1.pdf}}
        {
            \IfFileExists{#1.png}
              {\PackageWarning{pnghfig}{#1.png is a PNG; slows down compile time; consider converting to pdf}
                \includegraphics[height=#2]{#1.png}}
              {
                \IfFileExists{./figures/#1.png}
                  {\PackageWarning{pnghfig}{./figures/#1.png is a PNG; slows down compile time; consider converting to pdf}
                  \includegraphics[height=#2]{./figures/#1.png}}
                  {\tikz[baseline=-0.5em]{\node[draw=red,font=\color{red},fill=red!10!white] {\textit{#1}};}}%
              }
        }
    }
}

\newtheorem{lem}{Lemma}

\newtheorem{dfn}{Definition}

\newcommand{\lstlineref}[2]{\hyperref[#2]{\Cref*{#1}:\ref*{#2}}}
\newcommand{\lstlinerangeref}[3]{\hyperref[#2]{\Cref*{#1}:\ref*{#2}-\ref*{#3}}}
\crefname{lem}{Lemma}{Lemmas}
\Crefname{lem}{Lemma}{Lemmas}
\crefname{cor}{Corrolary}{Corrolaries}
\Crefname{cor}{Corrolary}{Corrolaries}
\crefname{thm}{Theorem}{Theorems}
\Crefname{thm}{Theorem}{Theorems}
\crefname{def}{Definition}{Definitions}
\Crefname{def}{Definition}{Definitions}
\crefname{ex}{Exercise}{Exercises}
\Crefname{ex}{Exercise}{Exercises}
\crefname{exa}{Example}{Examples}
\Crefname{exa}{Example}{Examples}

\usepackage{pgfplots} 
\pgfplotsset{compat=1.18}

\usepackage[T1]{fontenc}

\title{\tensrocq: Enabling diagrammatic reasoning in Rocq}
\author{Benjamin Caldwell} 
\affiliation{%
\institution{University of Chicago}
\country{USA}
}
\author{William Spencer}
\affiliation{%
\institution{University of Chicago}
\country{USA}
}
\author{Aleks Kissinger}
\affiliation{%
\institution{University of Oxford}
\country{UK}
}
\author{Robert Rand}
\affiliation{%
\institution{University of Chicago}
\country{USA}
}
\date{\today}

\begin{document}

\begin{abstract}

Symmetric monoidal categories (SMCs) are a common framework for reasoning about computation, focusing on the parallel and sequential compositionality of operations. String diagrams are a ubiquitous and powerful tool for reasoning about equations in SMCs, leveraging eliding the fine details of compositionality to focus on connectivity. However, when working with SMCs in a proof assistant, the rigid equational structure of composition occludes the essential connective information, longer proofs filled with uninformative syntactic manipulation. To address the gap between proof assistants and paper proof, we have developed verified tools for diagrammatic reasoning in Rocq, including inferring term equivalence and rewriting modulo the deformation of string diagrams. This is achieved by converting between syntactic representations of SMC terms and hypergraphs with interfaces, while preserving a common tensor semantics. We provide tools to develop simple SMC theories from generators and relations, and perform equational reasoning these systems. We also enable our tactics to be used in existing verification projects about SMCs which can be given semantics as tensor expressions.

\end{abstract}

\maketitle

\section{Introduction}

Process theories are representations of computation as a operation which is \textit{composable}, in sequence and in parallel~\cite{coecke2017picturing}. 
Many systems can be modeled as process theories, including logic circuits, the ZX-calculus~\cite{CoeckeDuncan2011}, and even linear algebra, by viewing matrices as transforming a vector space. 
From a category-theoretic perspective, process theories are instances of Symmetric Monoidal Categories (SMCs).
Through this interpretation, process theories gain a natural diagrammatic language of ``string diagrams.'' 
Much like a flowchart, string diagrams represent the structural relationship between components of a process, usually interpreted as the flow of information. 
These diagrams can then be enhanced with an equational theory specifying processes which should be considered equivalent.
Often, these equations arise from equalities of the semantics of a process theory, such as the matrix formalism of quantum circuits.

Process theories are easy to reason about on paper because reasoning corresponds to diagrammatic manipulation.
String diagrams place a particular focus on the flow of data through a process, which is encoded by \textit{connectivity}.
In fact, two string diagrams with the same connections represent the same process, as encapsulated by the mantra of ``only connectivity matters''~\cite{Selinger2010}.
Within the categorical framework of SMCs, this is encoded via a number of naturality and coherence conditions that express that the choice of associativity within a term does not affect its interpretation.
String diagrams encode all these facts in the single notion that a diagram may be freely deformed without changing its meaning.
The pattern of reasoning about a string diagram on paper is therefore a cycle of identifying some subdiagram which you want to change, deforming the diagram to separate that subdiagram, and replacing it with an equivalent diagram.

When attempting to formalize process theories within a proof assistant, things become more complex.
Associativity information that does not affect connectivity can be elided on paper because we know it does not affect the represented process.
However, representing an SMC in a proof assistant by an inductive datastructure requires a choice of associativity to be made explicitly. To show two terms are equivalent, it is necessary to apply naturality and coherence conditions explicitly to make the terms syntactically identical.
Rewriting with known equalities is just as awkward, as the target diagram must be manually deformed using only the naturality and coherence conditions to display the desired subdiagram syntactically, often requiring a large number of structural rewrites.
In practice, this means that reasoning about process theories in a proof assistant is the cycle of mentally decoding the visual noise of associativity to understand the connective information of a diagram, eventually identifying a subdiagram you want to change, tediously altering the associativity of the diagram until the subdiagram appears verbatim, and finally replacing the diagram.
The vast majority of proof lines can easily become dedicated to appeasing the proof assistant of the same conditions over and over.

Prior work has developed some techniques for how to ease the burden of dealing with associativity in SMCs.
Some tools implement automation to move terms around~\cite{vicar}, while other solutions use external tooling to automatically solve these associations~\cite{pous2026string}.
Of particular interest to our project is the tool Chyp~\cite{chyp}, which solves this problem by translating SMC terms into a graph structure to perform rewrites and extracting the resulting graph back to an SMC.
However, Chyp cannot be used within a general-purpose proof assistant because (1) it performs purely equational reasoning and cannot handle first- or higher-order logic and (2) its tactics and proof checker are based on graph rewriting, and thus cannot produce proofs in a format that could be accepted by a traditional interactive theorem prover.

There are major challenges in embedding the rewriting system of a tool such as Chyp fully within a proof assistant, as the existing tooling is unverified. 
Hence, we must extend the methodology behind Chyp to include denotational semantics to justify its reasoning chain.
This is a complex undertaking, as several interacting theories and the translations between them must be designed in a way that is amenable to verification. 
Moreover, performance is key when designing a rewrite tactic as it may be used many times throughout a single proof. 
To maximize performance, our development extensively uses reflective automation, which proves goals through the use of verified algorithms.
Due to this decision, we had to carefully choose the data structures we used to ensure they were performant enough to yield fast automation; in this we relied heavily on the \texttt{stdpp} library for fast implementations. 
We build the theory for these denotational semantics in \cref{sec:theory} and cover the design and proof engineering decisions that went into giving their verified implementations in \cref{sec:implementation}.


To this end, we present \href{https://github.com/inqwire/tensorrocq}{\tensrocq}, a Rocq tool for verified diagrammatic reasoning about SMCs with respect to tensor semantics.  
This implementation showcases a number of novel features:
\begin{itemize}
    \item We automatically resolve equalities derivable from the axioms of SMCs, i.e. only connectivity matters, allowing us to ignore associativity within SMC terms.
    \item We provide a tactic for diagrammatic rewriting within SMCs, automatically finding and replacing subdiagrams matching proven equalities up to associativity.
    \item We create a framework for easily defining new SMC theories in the style of Chyp, which are given semantic meaning 
    \item We provide a unifying framework to verify the relationship between SMCs and hypergraphs via tensor semantics.
    \item We provide an extensible system of typeclasses allowing \tensrocq to be integrated into existing projects by instantiating SMC theories.
    \item We provide the flexibility to work with existing Rocq libraries without requiring libraries to change their existing definitions or statements of lemmas.
\end{itemize}

By using the rewrite engine of \tensrocq, paper-style diagrammatic proofs can be directly translated into formal proofs with associativity elided automatically.
Focusing on the meaningful rewrites over the noise of associativity makes proofs more readable, because they are no longer dominated by syntactic manipulation.
It also makes proofs more resilient to small changes in definitions or statement, as terms are considered up to full SMC equivalence.

\section{Related Work}

Verification of symmetric monoidal categories, and associated rewriting tools, has been a popular topic of research in recent years.
We focus on three projects that tackle this problem directly: ViCAR~\cite{vicar}, Chyp~\cite{chyp}, and String Diagrams for Monoidal Categories in Rocq~\cite{pous2026string}.

The ViCAR project aims for generality by implementing typeclasses and using monoidal coherence to assist in rewriting diagrams~\cite{vicar}. It is embedded within Rocq and uses a technique called \emph{foliation} to separate terms into a normal form where patterns can be more easily identified. It provides a basic visualization engine, but is unable to ignore associativity constraints in composing morphisms, so the visualization has to directly convey associativity information, unlike in string diagrams.

Chyp is a proof assistant for symmetric monoidal categories written in Python~\cite{chyp}. It implements a cospan of hypergraphs data structure to rewrite symmetric monoidal category terms. Chyp has a focus on interactivity, as a standalone proof assistant that can be used to prove simple facts about SMCs defined by generators and relations~\cite{chyp}. A Chyp theory consists of atomic generators, definitions, axioms, and  theorems, which are proven using transitive chains of diagram equations solved by a handful of built-in tactics. Theories can be combined via a rudimentary module system with ML-style generic parameters, allowing one to build more complex theories from simpler ones.

Pous' String Diagrams project has a fully interactive string diagram editor which can extract the necessary steps for a Rocq proof~\cite{pous2026string}. This project is situated at the intersection of the above works. Like ViCAR, it attempts to be generally applicable to Rocq projects with its companion Rocq library. In order to do interactive proof, String Diagrams for Rocq uses an external tool that can import Rocq terms and export Rocq proofs as a series of tactic calls. 

Of these tools, \tensrocq is most similar to Chyp. It shares Chyp's theoretical underpinning in hypergraphs (described in a series of papers by Bonchi et al.~\cite{bonchi1, bonchi2, bonchi3}), adapted and extended to work with the Rocq proof assistant. This adaptation involved adding a semantic underpinning through tensors in order to verify the rewriting process undertaken by Chyp, allowing the rewrite engine to be used in any existing project that can be given semantics via tensors.

\section{Theoretical Foundations}\label{sec:theory}

\tensrocq is a Rocq library for diagrammatic reasoning in SMCs based on the formal connection between hypergraphs, PROPs, and tensors.
Specifically, hypergraphs, PROPs, and SMCs can be equipped with semantic interpretations as tensor expressions, which grounds translations between them in a common basis that will serve as the foundation for their application to formal verification.
In this section, we review these three building blocks and elaborate on the connections between them. In the next section, we give their Rocq definitions and details of their implementation.


\subsection{Tensors}

\begin{figure}[ht]
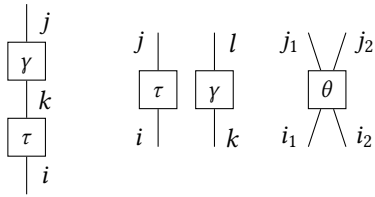

    \centering
    \ctikzfig{tensorvis}
    \caption{A visualization of tensor terms, including contraction, product, and tensors with multiple inputs. Following the standard convention, diagrams should be read from bottom to top.}
    \label{fig:tensorvisualization}
\end{figure}

Tensors are a popular abstraction in classical and quantum physics, where a tensor is given as a set of indexed numbers over a field. The following definitions related to tensors draw on Kissinger's work~\cite{Kissinger2014}, which will we discuss in the context of proof assistants.

\begin{dfn}[Concrete Tensor]
    A tensor $\tau$ over a field $\mathbb{F}$ with $n$ inputs and $m$ outputs with domain $T$ gives a value
    \[ \tau_{i_1\dots i_n}^{o_1\dots o_m} \in \mathbb{F} \]
    for all choices of $i_1,\dots,i_n, o_1,\dots,o_m \in I$, for some finite indexing set $I$. This can be equivalently represented as $\tau_{i}^o$ where $i$ and $o$ are vectors of lengths $n$ and $m$ respectively. 
\end{dfn}

The simplest example of a tensor would be the delta tensor. The delta tensor, often called the Kronecker delta, is 1 when its inputs and outputs are precisely equal and 0 otherwise. As a process, it can be thought of as an identity operator, as it does not change its outputs. As matrices, the identity matrix is the Kronecker delta.

\begin{dfn}[Delta tensor]\label{dfn:deltatensor}
    For an arity $n$ and field $\mathbb{F}$, the delta tensor $\delta$ over $F$ with $n$ inputs and $m$ outputs is defined so $\delta_{i}^{o}$ is $1$ if $i=o$ and $0$ otherwise. 
\end{dfn}

We then also define a process called tensor contraction. This allows us to give some notion of process flow for tensors. When connecting an output to an input, contraction acts as a sequencing operator, moving outputs from one tensor to inputs of another. When connecting inputs between two tensors, it restricts those inputs to always be equal. The same is true for outputs. The notion of requiring inputs or outputs to always be equal is a strange idea for processes, but our goal is to use tensors as a semantic domain, so this extra expressivity is not a problem.

\begin{dfn}[Tensor Contraction]\label{dfn:contraction}
    The contraction of two tensors $\sum_{k_1,\in I}\tau_{i}^{k}\gamma_{k}^{j}$ is taken by summing over the indexing set $I$. 
\end{dfn}

If two tensors are interpreted as modeling the behavior of some processes, contraction models the behavior of the composition of these processes.
For example, the delta tensor models the identity process, whose output is precisely its input, and this is indeed an identity for contraction.
Tensors also offer a way to reason about parallel composition by taking the product of two tensors in the corresponding field. 

\begin{dfn}[Tensor Product]
    The product of two tensors $\tau_i^j$ and $\gamma_k^\ell$ is given by $(\tau\gamma)_{ik}^{jl} = \tau_i^j * \gamma_k^l$, where $*$ is multiplication in the underlying field.
\end{dfn}

These tensors should be thought of as black-box operations. Many examples exist, notably vector spaces over a field $\mathbb{F}$ can be represented as tensors. The tensor product is then the Kronecker product on matrices, while composition is matrix multiplication. Tensors are a useful tool for reasoning diagrammatically. As tensors have a natural interpretation as traced-symmetric monoidal categories, they have a natural diagrammatic interpretation~\cite{Kissinger2014}.

There is also a dual syntactic notion which avoids writing out tensor contraction explicitly, and says a tensor contraction occurs when two variables are shared. \cref{dfn:contraction} could then be written as $\tau_i^k\gamma_k^j = \sum_{k\in I} \tau_i^k\gamma_k^j$. We also can visualize these terms as in \Cref{fig:tensorvisualization}.

Tensors are particularly useful for verification. They are a heavily abstracted form of linear algebra which has a natural way to express both composition and connectivity through the same operator. This means they can connect to the other parts of our rewrite engine, hypergraphs and APROPs, which we will see in \cref{subsec:bg-hypergraphs} and \cref{subsec:bg-props}. Tensors have also been shown to have a natural interpretation as monoidal categories, further cementing their significance in the intersection between hypergraphs and APROPs~\cite{Kissinger2014}.

\subsection{Hypergraphs}\label{subsec:bg-hypergraphs}


Our definitions for hypergraphs build on those of Bonchi~\cite{bonchi2}, extending them with an interpretation of hypergraphs as tensors.
A hypergraph is an extension of a graph where an edge can be incident to multiple vertices. 
In graphs, vertices are usually thought of as ``objects'' that edges can connect.
In hypergraphs, it is common to flip that hierarchy, instead thinking of \textit{edges} as primary, with vertices describing how edges are connected.
As such, we often wish to equip edges with some data, which can be used to interpret a hypergraph in a concrete way.

\begin{dfn}[Labeled Hypergraph]
    A labeled hypergraph is an ordered pair containing a set of vertices $\mathcal{V} = \{v_0 \dots v_i\}$ and edges $\mathcal{E} = {(l_0, e_0) \dots (l_j, e_j)}$ where $e_k \subseteq V$ for all k and $l_k$ is the label of the hyperedge. 
\end{dfn}

With this definition, a graph is an instance of a hypergraph where each $e_k$ has size 2. 
Labeled hypergraphs have a natural interpretation as tensors if the labels are associated to tensors, with vertices describing what indices to sum over in the resulting tensor. 
To make this translation concrete, it is easiest to work with \textit{directed} hypergraphs, whose edges' vertex sets are split into inputs and outputs, as tensors have designated inputs and outputs. 

\begin{dfn}[Labeled Directed Hypergraph]
    A labeled directed hypergraph is an ordered pair containing a set of vertices $\mathcal{V} = \{v_0 \dots v_i\}$ and edges $\mathcal{E} = {(l_0, el_0, er_0) \dots (l_j, el_j, er_j)}$ where $el_k, er_k \subseteq V$ for all k and $l_k$ is the label of the hyperedge. 
    $e_l$ is called the left edge set, or inputs, of the edge, while $e_r$ is the right edge set, or outputs, of the edge.
\end{dfn}

\subsubsection{Hypergraphs with Interfaces}

Hypergraphs are not sufficient to model process theories, as we often think of a process as having inputs and outputs. 
To extend our notion of hypergraphs, we add the notion of an interface. 
Interfaces act as the inputs and outputs of the process represented by the hypergraph as a whole.
Chyp refers to these interfaces by their categorical name of \emph{cospans} and Bonchi et al. give their categorical interpretation~\cite{bonchi2}.
We use the term interfaces to emphasize their interpretation over their categorical meaning.

\begin{dfn}[Interface] 
    An interface for a hypergraph $H$ is a pair of left and right ordered lists of vertices $(V_L, V_R)$ where $V_L$ are considered the ``inputs'' to the graph and $V_R$ are considered the outputs. If $v \in V_L$ we have a single input into the vertex $v$. The order of the inputs and outputs matters, and we will take a top-down approach to ordering edges from our interfaces.
\end{dfn}

\begin{figure}
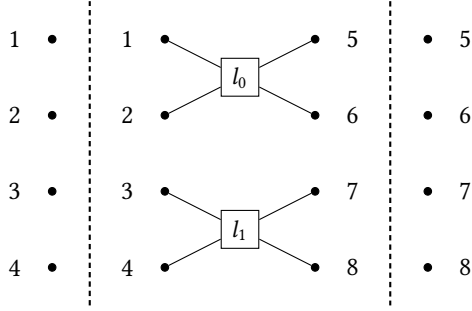

    \centering
    \ctikzfig{interfaces}
    \caption{A labeled directed hypergraph with interfaces on the left and right. The interfaces fix the order on the inputs to hyperedges, which may differ from the order in their adjacent hyperedges.}
    \label{fig:interfaces}
\end{figure}

We will often write our hypergraph with interfaces as an ordered triple given by $(V_L, H, V_R)$ where the inputs are on the left, outputs on the right, and the graph in the middle. With these new interfaces, we need an updated definition for degree which respects the idea that vertices can have connections to the interfaces as well.

\begin{dfn}[Tensors of Labeled Hypergraphs with Interfaces]
    Let $\mathcal{H}=(\mathcal{V}, \mathcal{E})$ be a labeled directed hypergraph with interfaces $V_L$ and $V_R$, and for each edge $e_k=(l_k, el_k, er_k)$ let a tensor $\tau(e_k)$ be given with $|el_k|$ inputs and $|er_k|$ outputs, over a field $\mathbb{F}$ and finite indexing set $A$.
    Then, the tensor semantics of $\mathcal{H}$ is
    \[ \tau(\mathcal{H})_{i}^{o} =
        \sum_{f : V \rightarrow A} 
        \delta_{i}^{f(V_L)} \cdot
        \delta_{f(V_R)}^{o} \cdot
        \prod_{(l_k,v,w)\in \mathcal{E}}{\tau(e_k)_{f(v)}^{f(w)}}.
        \]
\end{dfn}

The key feature of tensors and hypergraphs is their emphasis on \textit{connectivity} of tensors.  The benefit of using the two in combination is that they have complementary characteristics.  Tensors are abstract and not syntactic, but have very clear semantic meaning as functions to fields, while hypergraphs are concrete and syntactic objects but lack the same immediate semantic interpretation. Combining the two allows syntactic reasoning about hypergraphs to prove algebraic facts, as is detailed in~\cite{Kissinger2014}.

The reverse direction is true as well: every tensor expression (meaning a tensor built up with product and contraction from some set of known, abstract "black-box" tensors) can be represented by a labeled directed hypergraph by interpreting each input, output, and contraction as a vertex, and each abstract tensor as an edge from its inputs to its outputs. The power of this translation is that labeled directed hypergraphs which are isomorphic have equal semantics as tensor expressions. This allows algebraic relations to be proven by graphical reasoning. To show this, we build sequential composition and parallel products of hypergraphs to mirror the contraction and product of their tensor semantics.

For simplicity, composition is defined only for hypergraphs with interfaces which have equal interior interfaces, and with only these vertices in common. This can always be ensured by relabeling one of the hypergraphs.

\begin{dfn}[Vertices]\label{dfn:vertices-interfaces}
    The set of vertices $V(\mathcal{H})$ of a hypergraph with interfaces $\mathcal{H}=(V_L,H,V_R)$, where $H=(V,E)$, is the set $V_L\cup V_R \cup V$.
\end{dfn}

\begin{dfn}[Composition of Interfaces]
    Given two hypergraphs with interfaces $\mathcal{H}=(H_L,H,H_R)$ and $\mathcal{J}=(J_L,J,J_R)$, where $H_R = J_L = V(\mathcal{H}) \cap V(\mathcal{J})$, we define the composition $\mathcal{H};\mathcal{J}=(H_L, H, H_R) ; (J_L, J, J_R)$ of these hypergraphs to be $(H_L, H\,\sqcup\, J , J_R)$. Here, $H\,\sqcup\, J$ stands for the hypergraph whose vertices are the union of the vertices of $H$ and $J$ and whose edges are the disjoint union of the edges of $H$ and $J$. 
\end{dfn}

This definition of composition joins hypergraphs along a common interface, and corresponds directly to the contraction of the corresponding tensor semantics.

\begin{lem}
\label{lem:compose}
    Let $\mathcal{H}=(H_L,H,H_R)$ and $\mathcal{J}=(J_L,J,J_R)$ be hypergraphs with interfaces such that $H_R = J_L = V(\mathcal{H}) \cap V(\mathcal{J})$. 
    For each edge $e=(l,el,er)$ belonging to either $H$ or $J$, let a tensor $\tau(e)$ be given with $|el|$ inputs and $|er|$ outputs, over a field $\mathbb{F}$ and finite indexing set $A$. Let $n=|H_L|$, $m=|H_R|=|J_L|$, and $o=|J_R|$.
    Then
    \[ \tau(\mathcal{H};\mathcal{J})_{i}^{k} = \sum_{j} {\tau(\mathcal{H})_{i}^{j}\tau(\mathcal{J})_{j}^{k}}. \]
    
\end{lem}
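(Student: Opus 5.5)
Unfold both sides using the definition of tensor semantics for hypergraphs with interfaces, and show that they are the same finite sum over assignments of indices to vertices. Write $V_H = V(\mathcal{H})$, $V_J = V(\mathcal{J})$, so $V(\mathcal{H};\mathcal{J}) = V_H \cup V_J$ with $V_H \cap V_J = H_R = J_L$. The edges of $\mathcal{H};\mathcal{J}$ are the disjoint union of those of $H$ and $J$. Hence for $f : V_H\cup V_J \to A$ the edge product splits as $\prod_{e\in H}\tau(e)_{f(v)}^{f(w)}\cdot\prod_{e\in J}\tau(e)_{f(v)}^{f(w)}$, and each factor depends only on $f|_{V_H}$ or on $f|_{V_J}$ respectively.

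The left-hand side is therefore
\[ \sum_{f : V_H\cup V_J\to A} \delta_i^{f(H_L)}\,\delta_{f(J_R)}^{k}\,P_H(f|_{V_H})\,P_J(f|_{V_J}). \]
The right-hand side, after expanding and distributing, is
\[ \sum_j\sum_{g:V_H\to A}\sum_{h:V_J\to A} \delta_i^{g(H_L)}\,\delta_{g(H_R)}^{j}\,P_H(g)\,\delta_j^{h(J_L)}\,\delta_{h(J_R)}^{k}\,P_J(h). \]

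First sum out $j$. The product $\delta_{g(H_R)}^{j}\delta_j^{h(J_L)}$ is nonzero for exactly one $j$, so the sum over $j$ collapses to $\delta_{g(H_R)}^{h(J_L)}$. Since $H_R = J_L$ as lists, this is $1$ exactly when $g$ and $h$ agree on every vertex of $H_R$, i.e.\ on $V_H\cap V_J$, and $0$ otherwise.

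Then reindex the remaining double sum. Use the bijection between pairs $(g,h)$ that agree on $V_H\cap V_J$ and functions $f : V_H\cup V_J\to A$, given by $f\mapsto (f|_{V_H}, f|_{V_J})$ with inverse gluing. Under it, the double sum over $g,h$ weighted by the agreement indicator becomes exactly the single sum over $f$, and the remaining factors match term by term.

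The main obstacle is the handling of the interface lists. $H_R$ might contain repeated vertices, but the collapsed delta still equals the agreement indicator on the underlying set, since equal vertices receive equal values under any function. More delicate is the hypothesis $H_R = V_H\cap V_J$ used as a set: it is what guarantees that the gluing really is a bijection and that no other shared vertex is silently constrained. I would state the restriction/gluing bijection as a separate lemma about finite function spaces and prove the reindexing with it, via a Fubini-style splitting of $\sum_{f}$ over $V_H\cup V_J$ into sums over $V_H\setminus V_J$, $V_H\cap V_J$, and $V_J\setminus V_H$.
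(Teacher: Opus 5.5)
Your proposal is correct and takes essentially the same approach as the paper's proof: expand both sides by the definition, split the edge product over $E\sqcup F$, and use the interface deltas on $H_R=J_L$, together with the fact that $V$ and $U$ overlap only there, to merge the sums over $f_V$ and $f_U$ into a single sum over $f:V\cup U\to A$. Your version makes explicit steps the paper compresses: summing out $j$ to get an agreement indicator, the restriction/gluing bijection, and repeated interface vertices. One small wording fix: $\delta_{g(H_R)}^{j}\delta_j^{h(J_L)}$ is nonzero for at most one $j$, not exactly one, though your collapse to $\delta_{g(H_R)}^{h(J_L)}$ is right either way.
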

\begin{proof}
        Let $H=(V,E)$ and $J=(U,F)$, so $V\cap U = J_L = H_R$.
    By definition, the tensor semantics of the composition is 
    \[ \tau(\mathcal{H};\mathcal{J})_{i}^{k} = 
        \sum_{f : V \cup U \rightarrow A} 
        \delta_{i}^{f({H_L})} \cdot
        \delta_{f({J_R})}^{k} \cdot
        \prod_{(l_k,v,w)\in E \sqcup F}{\tau(e_k)_{f(v)}^{f(w)}}
    \]
    
    The contraction of the respective tensor semantics is
    \begin{equation}
    \begin{split}
        & {\tau(\mathcal{H})_{i}^{j}\tau(\mathcal{J})_{j}^{k}} =
        \\ &
        \sum_{f_V : V \rightarrow A}
        \delta_{i}^{f_V({J_L})} \cdot
        \delta_{f_V({J_R})}^{j} \cdot
        \prod_{(l_k,v,w)\in E}{\tau(e_k)_{f(v)}^{f_V(w)}} \cdot
        \sum_{f_U : U \rightarrow A}
        \delta_{j}^{f_U({H_L})} \cdot
        \delta_{f_U({H_R})}^{k} \cdot
        \prod_{(l_k,v,w)\in F}{\tau(e_k)_{f(v)}^{f_U(w)}}
    \end{split}
    \end{equation}
        
    Observe that the functions $f_V$ and $f_U$ are each required to take values $j_1,\dots,j_m$ on $H_R=J_L$ (rather, if they do not, the summand is zero because of the delta tensors). Moreover, recall that $V$ and $U$ are disjoint outside $H_R=J_L$; so, we can join these summations to run over $f:V\cup U \rightarrow A$ substituting for $f_V$ and $f_U$, removing the delta tensor terms involving $H_R$ and $J_L$. Then, 
    observing that
    \begin{equation}
    \begin{split}
    & \prod_{(l_k,(v_1,\dots,v_n),(w_1,\dots,w_m))\in E \sqcup F}{\tau(e_k)_{f(v_1),\dots,f(v_n)}^{f(w_1),\dots,f(w_m)}} =
    \\ &
        \prod_{(l_k,(v_1,\dots,v_n),(w_1,\dots,w_m))\in E}{\tau(e_k)_{f(v_1),\dots,f(v_n)}^{f(w_1),\dots,f(w_m)}} \cdot 
        \prod_{(l_k,(v_1,\dots,v_n),(w_1,\dots,w_m))\in F}{\tau(e_k)_{f(v_1),\dots,f(v_n)}^{f(w_1),\dots,f(w_m)}},
    \end{split}
    \end{equation}
    we arrive at (1).
\end{proof}

\begin{dfn}[Product of Interfaces]
   Given two disjoint hypergraphs $H$, and $J$, with disjoint interfaces $H_L, H_R$ and $J_L, J_R$ we define the parallel product, or stack, $(H_L, H, H_R) * (J_L, J, J_R)$ to be $(H_L + J_L, H\,\cup\, J, H_R + J_R)$. 
\end{dfn}

An analogous statement to that of \Cref{lem:compose} holds relating the tensor semantics of the product of hypergraphs with interfaces to the product of the tensor semantics of the hypergraphs. 

\begin{lem}
\label{lem:stack}
    Let $\mathcal{H}$ and $\mathcal{J}$ be disjoint hypergraphs with disjoint interfaces.
    Then 
    \[ \tau(\mathcal{H}\ast\mathcal{J})_{ij}^{kl} = 
    \tau(\mathcal{H})_{i}^{k} \tau(\mathcal{J})_{j}^{l}
    \]
\end{lem}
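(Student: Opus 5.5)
The plan is to expand both sides using the definition of the tensor semantics of a labeled hypergraph with interfaces, and then factor the sum over assignments on the combined vertex set into a product of two independent sums.

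Write $\mathcal{H}=(H_L,H,H_R)$ with $H=(V,E)$ and $\mathcal{J}=(J_L,J,J_R)$ with $J=(U,F)$. By hypothesis the two hypergraphs are disjoint and have disjoint interfaces, so $V(\mathcal{H})\cap V(\mathcal{J})=\emptyset$. Here I take $V$ and $U$ to contain their interface vertices, as in \cref{dfn:vertices-interfaces}.

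By definition,
\[ \tau(\mathcal{H}\ast\mathcal{J})_{ij}^{kl} = \sum_{f : V\cup U \rightarrow A} \delta_{ij}^{f(H_L + J_L)}\cdot\delta^{kl}_{f(H_R+J_R)}\cdot\prod_{e\in E\cup F}\tau(e)_{f(v)}^{f(w)}. \]

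The first step is to split the delta tensors along the concatenated lists. Concretely, $\delta_{ij}^{f(H_L+J_L)} = \delta_i^{f(H_L)}\,\delta_j^{f(J_L)}$, because equality of concatenated index vectors of matching lengths is componentwise. The same splitting applies to the output deltas. Since the edge sets are disjoint, the product over $E\cup F$ splits into a product over $E$ times a product over $F$, exactly as in the second display of the proof of \cref{lem:compose}.

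The second step is to use $V\cap U=\emptyset$. Functions $f:V\cup U\to A$ are then in bijection with pairs $(f_V,f_U)$ of functions $f_V:V\to A$ and $f_U:U\to A$, given by restriction and gluing. Under this bijection:
\begin{itemize}
\item every factor involving $H_L$, $H_R$ or $E$ depends only on $f_V$;
\item every factor involving $J_L$, $J_R$ or $F$ depends only on $f_U$.
\end{itemize}
The summand therefore has the form $g(f_V)\,h(f_U)$. By distributivity in $\mathbb{F}$, $\sum_{(f_V,f_U)} g(f_V)h(f_U) = \big(\sum_{f_V} g(f_V)\big)\big(\sum_{f_U} h(f_U)\big)$, and the two factors are by definition $\tau(\mathcal{H})_i^k$ and $\tau(\mathcal{J})_j^l$.

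I expect the main obstacle to be the bookkeeping in this reindexing. One part is justifying the bijection $A^{V\cup U}\cong A^V\times A^U$ for disjoint finite sets, which in a formal setting requires care with finite sums over function spaces. The other part is checking that each edge's incident vertices lie entirely within $V$ or entirely within $U$, so that the $E$-factors really depend only on $f_V$ and the $F$-factors only on $f_U$. Unlike \cref{lem:compose}, no deltas need to be eliminated here, so the argument should be a direct Fubini-style factorization once these two facts are in place.
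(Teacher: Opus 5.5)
Your proposal is correct and follows the paper's proof: expand the definition of $\tau(\mathcal{H}\ast\mathcal{J})$, split the delta tensors and the edge product, then use $V\cap U=\emptyset$ to factor the sum over $f : V\cup U\to A$ into independent sums over its restrictions to $V$ and $U$. Your splitting $\delta_{ij}^{f(H_L+J_L)}=\delta_i^{f(H_L)}\,\delta_j^{f(J_L)}$ is stated more carefully than the paper's version of the same step, which contains an index typo.
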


\begin{proof}
    Let $H = (V,E)$ with interfaces $V_L, V_R$ and $J = (U,F)$ with interfaces $U_L, U_R$ be disjoint hypergraphs and interfaces. Then $\tau(\mathcal{H} * \mathcal{J})^{kl}_{ij}$ is precisely given by
    \[ \tau(\mathcal{H} * \mathcal{J})_{ij}^{kl} =
        \sum_{f : V \cup U \rightarrow A} 
        \delta_{ij}^{f(I_i)f(I_j)} \cdot
        \delta_{f(O_k)f(O_l)}^{kl} \cdot
        \prod_{(l_k,v,w)\in E \cup F}{\tau(e_k)_{f(v)}^{f(w)}}.
        \]
    We use the fact that $\delta_{ij}^{kl} = \delta_i^k\delta_k^l$ as well as the fact that $V, U$ are disjoint, we can split the function $f$ into its restriction over $V$ and $U$ and split the sum. Similarly we can also split the product as $E, U$ are disjoint, giving us the term
    \[  \sum_{f : V \rightarrow A} 
        \delta_{i}^{f(I_i)} \cdot
        \delta_{f(O_k)}^{k} \cdot
        \prod_{(l_k,v,w)\in E}{\tau(e_k)_{f(v)}^{f(w)}} \cdot
        \sum_{f : U \rightarrow A} 
        \delta_{j}^{f(I_j)} \cdot
        \delta_{f(O_l)}^{l} \cdot
        \prod_{(l_k,v,w)\in F}{\tau(e_k)_{f(v)}^{f(w)}}.
        \]
    which is precisely 
    \[
        \tau(\mathcal{H})\cdot\tau(\mathcal{J}).
    \]
\end{proof}

\subsubsection{Rewriting Hypergraphs}

Rewriting at the level of hypergraphs is well explored. The general context is that there is a known rewrite rule $T \equiv S$ at the level of hypergraphs, and we want to substitute $T$ for $S$ in a larger hypergraph $H$ containing $T$ as a subgraph. The standard method to accomplish this is to represent $H$ as a \textit{double pushout}, in which it is separated into three parts, the middle one being isomorphic to $S$. Then, we can directly substitute $S$ in place of $T$.

\begin{lem}[Decomposition]\label{lem:decomposition}
    For a hypergraph $H$ with interfaces $V_L$ and $V_R$ and a given subhypergraph $T$ of $H$, we can form the decomposition of $H$ into $C_L$ and $C_R$ such that there exist $V_i, V_j,$ and $V_k$ such that 
        $H = (V_L, C_L, V_k + V_i) ; ((V_k, \emptyset, V_k) * (V_i, T, V_j)) ; (V_k + V_j, C_R, V_R)$
\end{lem}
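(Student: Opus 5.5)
We construct the decomposition explicitly, taking $T$ to be given together with its own interfaces, i.e. as a hypergraph with interfaces $(V_i, T, V_j)$ whose edges form a subset of those of $H$. Let $H=(V,E)$ and let $E_T\subseteq E$ be the edges of $T$. Put the remaining edges $E\setminus E_T$ into $C_L$, and take $C_R$ with no edges, so that all of the ``context'' sits on one side. This makes the equation hold for the simplest possible reason.

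Concretely, set $V_i$ to be a list of the vertices of $T$ that are incident to something outside $T$: an edge of $E\setminus E_T$ or an interface vertex. Set $V_j$ to be a list of all vertices of $T$. Let $V_k$ be a list of all vertices of $H$ not in $T$, together with any interface vertices not otherwise recorded.

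\begin{itemize}
\item $C_L$ has vertices $(V\setminus V(T))\cup V_i\cup V_L$ and edges $E\setminus E_T$, with interfaces $(V_L, C_L, V_k+V_i)$.
\item The middle layer is $(V_k,\emptyset,V_k)*(V_i,T,V_j)$.
\item $(V_k+V_j, C_R, V_R)$ is an edgeless hypergraph whose vertices are $V_k\cup V_j$, with the outputs $V_R$ chosen among them.
\end{itemize}

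First we check the side conditions of composition and product. The product needs disjointness of $V_k$ from the vertices of $T$, which holds by construction. Each composition needs the shared vertices to be exactly the interior interface. Vertices of $T$ not in $V_i$ touch only $T$, so they do not occur in $C_L$. Some relabeling of $C_L$'s copies is needed, which the paper allows up to renaming.

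Next we compute the composite. The composition and product definitions take unions of vertices and disjoint unions of edges. The result therefore has vertex set $V(C_L)\cup V_k\cup V(T)\cup V(C_R)=V$, edge set $(E\setminus E_T)\sqcup E_T=E$, and interfaces $V_L$ and $V_R$. It is thus equal to $H$, or isomorphic to it after the renaming. If equality is only needed semantically, Lemma~\ref{lem:compose} and Lemma~\ref{lem:stack} transfer it to tensor semantics.

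The main obstacle is the bookkeeping of the interfaces, because interfaces in this paper are ordered lists that may contain repeats. A vertex of $V_R$ may lie inside $T$, and a vertex of $T$ may be both an input and an output of $H$. Routing such vertices through $V_j$ into $C_R$ handles outputs of $T$ feeding $V_R$. Routing them through $V_k$ handles the pass-through vertices. The strict condition $H_R=J_L=V(\mathcal H)\cap V(\mathcal J)$ in the composition definition can fail when a vertex appears twice in an interface, or when a vertex of $T$ that is not in $V_j$ meets $C_R$. We resolve this by defining $V_j$ as all of $V(T)$, so that every vertex of $T$ is visible to $C_R$, and by working with vertex sets rather than multiplicities when checking the intersection conditions.
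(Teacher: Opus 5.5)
\textbf{There is a genuine gap.} Your construction only proves a degenerate reading of the lemma, and it does so by discarding what the decomposition is for.

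You begin by taking $T$ with its own interfaces $(V_i,T,V_j)$. You then redefine $V_i$ as every vertex of $T$ touching the context, and $V_j$ as all of $V(T)$. If $V_i,V_j$ are free and monogamy is not required, your set-wise bookkeeping does go through, but so would many trivial choices.

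The result is useless for \Cref{lem:dporewrite} and for the rewrite tactic. There, the middle factor must be the matched left-hand side with the interface it inherits from the rule. Only then can it be swapped for $T'$, and only then can $C_L$ and $C_R$ be extracted back to APROP terms.
\begin{itemize}
\item Taking $V_j=V(T)$ exposes internal vertices of $T$. This changes the tensor of the middle factor, so $\tau(T)=\tau(T')$ tells you nothing about it; $T'$ does not even contain those vertices.
\item If $V_i,V_j$ are the rule's interfaces, putting every context edge into $C_L$ fails at once. Take $H=f;g$ with $f:a\to b$, $g:b\to c$ and $T=(a,f,b)$. The edge $g$ forces $b\in V(C_L)\cap V(\text{middle})$, but $V_k+V_i=[c,a]$, which violates the side condition of composition.
\item Even with your enlarged $V_i=[a,b]$, neither piece is monogamous. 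In $C_L$, $b$ is both an output of the cospan and an input of $g$. In the middle factor, $b$ is both a left-boundary vertex and the output of $f$.
\end{itemize}

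\textbf{The missing idea} is the one the paper points to, via Bonchi et al., in its rewriting section. Assume $H$ is monogamous and acyclic and the match of $T$ is convex. Put into $C_L$ exactly the context edges with a directed path into $T$, and put the remaining context edges into $C_R$. Let $V_k$ be the vertices on the resulting cut that bypass $T$.
\begin{itemize}
\item Convexity ensures no context edge lies on a path from $T$ back into $T$, so no edge needs to be both before and after $T$.
\item Acyclicity makes the split consistent.
\item Monogamy makes the shared vertices of the two compositions exactly $V_k+V_i$ and $V_k+V_j$, with $V_i,V_j$ the images of the rule's interfaces.
\end{itemize}
In the example above this gives $C_L=\mathrm{id}_a$, $V_k$ empty, and $C_R=(b,g,c)$.

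A warning sign is that your argument never uses convexity or acyclicity. It would equally ``decompose'' $f_1;e;f_2$ with $T=\{f_1,f_2\}$. That is a non-convex match, and no decomposition with $T$'s own interface $([a,c],[b,d])$ exists for it.
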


\begin{figure}[h]
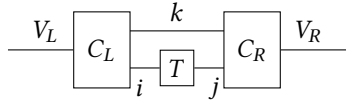

    \centering
    \ctikzfig{decomposition}
    \caption{The decomposition given by \Cref{lem:decomposition}.}
    \label{fig:decomposition}
\end{figure}

\begin{lem}[Double Pushout Rewrite]\label{lem:dporewrite}
    For a hypergraph $H$ with interfaces $V_L$ and $V_R$ and a rewrite rule $T = T'$ where $T$ is a subgraph of $H$, we can perform the decomposition procedure as in \cref{lem:decomposition}, replacing the term $T$ with $T'$.
\end{lem}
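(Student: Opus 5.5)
Given a rewrite rule $T = T'$, I read it as an equation between hypergraphs with interfaces, $(V_i, T, V_j)$ and $(V_i, T', V_j)$, that share boundaries and satisfy $\tau(V_i,T,V_j) = \tau(V_i,T',V_j)$. The plan is to show that the hypergraph $H'$ obtained by substituting $T'$ for $T$ has the same tensor semantics as $H$. The argument is purely compositional: decompose $H$, swap the middle piece, and use \Cref{lem:compose} and \Cref{lem:stack} to see that the semantics only depend on the semantics of the pieces.

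\textbf{Step 1: decompose.} Apply \Cref{lem:decomposition} to $H$ and the matched subgraph $T$. This gives contexts $C_L$, $C_R$ and boundary lists $V_i, V_j, V_k$ with
\[ H = (V_L, C_L, V_k + V_i) ; \big((V_k, \emptyset, V_k) * (V_i, T, V_j)\big) ; (V_k + V_j, C_R, V_R). \]
Define $H'$ by the same expression with $T$ replaced by $T'$. First relabel the interior vertices of $T'$ so they are fresh, so that the disjointness hypotheses of composition and product hold.

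\textbf{Step 2: compute semantics.} Apply \Cref{lem:compose} twice and \Cref{lem:stack} once. This gives
\[ \tau(H)_a^b = \sum_{c,d,e,f} \tau(C_L)_a^{ce}\, \tau(V_k,\emptyset,V_k)_c^d\, \tau(V_i,T,V_j)_e^f\, \tau(C_R)_{df}^b, \]
and the same formula holds for $H'$ with $T'$ in place of $T$. Substituting the hypothesis $\tau(V_i,T,V_j)=\tau(V_i,T',V_j)$ termwise gives $\tau(H)=\tau(H')$. Note also that $\tau(V_k,\emptyset,V_k)$ is the delta tensor, so it collapses the $V_k$ wires; this is not needed for the proof.

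\textbf{Main obstacle: well-formedness side conditions.} The lemmas require $H_R = J_L = V(\mathcal H)\cap V(\mathcal J)$ for composition and disjointness for products. The middle stack and the contexts must satisfy these exactly. For $T$ this follows because the decomposition places every vertex of $T$ outside $V_i\cup V_j$ only inside $T$. For $T'$ it must be ensured by the fresh relabeling, together with a small lemma that tensor semantics are invariant under vertex renaming, i.e.\ reindexing the sum over $f$ by a bijection. I expect this bookkeeping, rather than the algebra, to be the hard part. If the decomposition of \Cref{lem:decomposition} does not directly supply these intersection properties, I would first strengthen it to guarantee them, for instance by taking $C_L$ and $C_R$ to contain no vertices of $T$ other than those in $V_i\cup V_j$.
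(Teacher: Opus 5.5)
The paper gives no argument of its own here; it defers to Bonchi et al., where the replacement step is the second pushout of a DPO square, i.e.\ $T'$ is \emph{glued} onto the context along the images of its interface. Your semantic route (decompose, swap the middle factor, then use \Cref{lem:compose} and \Cref{lem:stack}) is a sensible alternative and mirrors how \tensrocq{} actually justifies rewrites. As written, though, it breaks exactly where that pushout does its work.

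You assume the rule can be presented as $(V_i,T,V_j)$ and $(V_i,T',V_j)$ with literally the same boundary lists, so that only interior vertices of $T'$ need renaming. That fails whenever the two sides differ in which input positions coincide with output positions. This is the case for every rule with a bare wire on one side only. Take the paper's own rule $u * \mathrm{id} ; m = \mathrm{id}$:
\begin{itemize}
\item the left side has boundary $([a],[c])$ with $a \neq c$;
\item the right side is a single vertex $x$ with boundary $([x],[x])$.
\end{itemize}
No injective relabeling turns the right side into a cospan with boundary $([a],[c])$. Nor does any edgeless cospan with two distinct boundary vertices have semantics $\delta$, since its tensor is constant in $i$ and $o$. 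So $(V_i,T',V_j)$ does not exist and your $H'$ is undefined. \Cref{lem:compose} also cannot be invoked, because it requires equal lists at the junction and an intersection that is exactly the interface. The same happens for the counit laws, for \texttt{\_CNOT\_CNOT\_cancels}, and for the spider-to-wire rules used in the \vyzx{} examples.

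To repair this, the substitution has to change the context at the junction, not just $T'$. There are two ways to do it.
\begin{itemize}
\item \emph{General fix.} Define composition by gluing: take the disjoint union, then identify $H_R[p]$ with $J_L[p]$ for each position $p$. Prove the corresponding version of \Cref{lem:compose} by the same delta argument: summing $\delta^{j}_{f_V(H_R)}\delta^{f_U(J_L)}_{j}$ over $j$ keeps exactly the pairs $(f_V,f_U)$ that agree on glued points, and these are precisely the assignments on the quotient.
\item \emph{Monogamous case.} When both sides of the rule are monogamous, keep the strict composition. Relabel $T'$ so its left boundary is $V_i$, and injectively rename $C_R$ so that its left interface becomes $V_k$ followed by the actual right boundary of the relabeled $T'$. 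One checks that the strict intersection conditions still hold, and your renaming-invariance lemma then applies.
\end{itemize}
With either repair, your Step 2 computation goes through unchanged.

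Note finally that this argument yields only semantic soundness. The cited DPO results also show that the rewritten graph is again acyclic and monogamous. The paper relies on that fact to extract APROP terms, and your argument does not address it.
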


Both \Cref{lem:decomposition} and \Cref{lem:dporewrite} have been well explored and proved through the language of cospans of hypergraphs~\cite{bonchi2}. Their inclusion here serves to illustrate the process undertaken by \tensrocq.

\subsection{PROPs}\label{subsec:bg-props}

Diagrammatic reasoning and symmetric monoidal categories (SMCs) are closely tied together, with each SMC coming equipped with a diagrammatic language. In general, SMCs are systems with objects and transformations between them called arrows. We can take tensor products over objects and arrows in order to perform actions in parallel, and we can sequence arrows together to perform multiple actions in a row. For a clear definition of SMCs and their corresponding diagrammatic languages, we recommend Selinger's survey of the topic~\cite{Selinger2010}.

\begin{dfn}[PROPs]
    A prop is a symmetric monoidal category where the objects are natural numbers and the tensor product on objects is addition. Specifically we have a collection of arrows labeled as $f : n \to m$ where $n$ is the domain and $m$ the codomain of $f$, a braid operator $\beta^n_m : n \otimes m \to m \otimes n$, and composition $f ; g : n \to o$ for arrows $f : n \to m$ and $g : m \to o$.
\end{dfn}

PROPs are useful for tracking generic amounts of nonspecific information, making them perfect for process theories. The object $n$ for some natural number $n$ could be used to represent $n$ booleans and the arrows could be boolean operators such as $\mathit{AND}$, $\mathit{OR}$, and $\mathit{XOR}$. In general we can use them to reason about any kind of information flowing through a system, not just boolean circuit logic. Occasionally we want to put restrictions on two objects being equal in some sense. One way to do this is through the use of Autonomous PROPs or APROPs. 

\begin{dfn}[Autonomous PROP]\label{dfn:autonomous}
    Autonomous PROPs are PROPs enhanced with two families morphisms called the unit ($\eta_n$) and counit ($\epsilon_n$) given by $\eta_n : 0 \to n + n$ and $\epsilon_n : n + n \to 0$ such that $\eta_n \otimes id_n ; id_n \otimes \epsilon_n = id_n$ and $id_n \otimes \eta_n ; \epsilon_n \otimes id_n = id_n$.
\end{dfn}

What the $\eta$ and $\epsilon$ operators actually stand for becomes more clear in practice when we turn them into tensors or hypergraphs. As a hypergraph with interfaces, $\eta_n$ as a tensor is a collection of delta tensors (\cref{dfn:deltatensor}) which force the top $n$ outputs to be equal to the bottom $n$ outputs in order. The same is true of $\epsilon_n$, only for inputs. 
This allows for representing connectivity information which does not respect the directionality of edges.
Though these terms are not present in every SMC, they are used extensively in theories that include them.
By expressing them syntactically, we can reason about them directly at the level of hypergraph isomorphism, rather than treating them as generators and having to use extra rules.

APROPs are the primary syntactic interface for working with TensorRocq. All theories are stated as APROPs and then translated into hypergraphs to perform rewrites before being extracted back to APROP terms. With that in mind, we present the details of TensorRocq's implementation before proceeding to give examples of its use.

\section{Implementation}\label{sec:implementation}

To reason with hypergraphs in a verified context, we have to encode the concepts of hypergraphs, tensors, and PROPs within Rocq.
We want these definitions to be sufficiently \textit{expressive} to apply to any possible theory admitting tensor semantics.
We also want structures which are \textit{computational}, meaning we can perform operations or comparisons \textit{effectively} within Rocq\footnote{
By \textit{effective} computability, we mean that all functions and operations should be defined in such a way that Rocq's reduction engine can fully evaluate them, without being blocked by opaque definitions or axioms. 
This is a requirement for proof by reflection, which relies crucially on the reduction engine. 
}.
This unlocks the powerful technique of \textit{proof by reflection}, where verified algorithms are used to generate proofs.
Computation is usually much faster than proof search, and benefits from much better scaling to large goals, so is the go-to method for (semi-)decidable propositions. 
However, the most expressive definitions in Rocq are invariably not computable, as they embed the types and functions of Rocq directly, and these cannot in general be compared for equality.

The approach we take in \tensrocq is to use explicit, computational definitions as much as possible without limiting expressiveness.
For example, we restrict hypergraphs to have vertices indexed by binary natural numbers (Rocq's \coqe{positive} datatype), rather than any generic type, because this adds no restriction for finite graphs and positives allow for efficient computation (as opposed to Rocq's unary naturals).
We often recover expressiveness using automation, particularly through \textit{quotation}, in which we automatically find a syntactic, computable representation of some object that may not be otherwise effectively computable.
For example, this allows us to reason about PROPs with sets of generators that lack decidable equality by representing the finitely-many generators appearing in any given context by elements of some computable type. 

The \tensrocq rewrite engine is built to rewrite APROP terms by representing them with hypergraphs, and using effective algorithms to find matches up to SMC structure. 
We connect APROPs and hypergraphs with a common tensor semantics to give a definition of equivalence that enables our rewrites to be verified.
We extend the rewrite engine to apply within general SMCs, which may not have computational structure that algorithms can examine, by converting terms to APROPs.
In this way, APROPs serve as a syntactic interface to apply hypergraph-based reasoning within general SMCs, all verified with respect to tensor semantics.

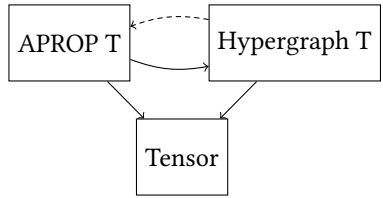
\begin{figure}[ht]
    \centering
    \begin{tikzpicture}
	\begin{pgfonlayer}{nodelayer}
		\node [style=medium box] (0) at (-3, 0) {APROP T};
		\node [style=medium box] (1) at (3, 0) {Hypergraph T};
		\node [style=medium box] (2) at (0, -3) {Tensor};
	\end{pgfonlayer}
	\begin{pgfonlayer}{edgelayer}
		\draw [style=diredge, bend right=15] (0) to (1);
		\draw [style=diredge] (0) to (2);
		\draw [style=diredge] (1) to (2);
		\draw [style=dashed diredge, bend left=345] (1) to (0);
	\end{pgfonlayer}
\end{tikzpicture}
    \caption{The core rewrite engine of \tensrocq, showing how APROPs and hypergraphs are equivalent and use tensors for semantics. Solid lines denote total functions, while the dotted line indicates a partial function supported on acyclic monogamous hypergraphs.}
    \label{fig:conversion}
\end{figure}

\subsection{Tensors}

The efficacy of tensors as a semantic backing is their \textit{compositionality} and their representation of \textit{connectivity}.
Tensors naturally represent sequential and parallel compositions by contraction and product, respectively, so they can encode the horizontal and vertical compositions of a symmetric monoidal category.
This is particularly useful because they encode the complex notion of ``only connectivity matters'' for string diagrams as simple algebraic identities, dramatically simplifying reasoning.
This simple equational theory makes reasoning about general tensors significantly simpler than reasoning about more complicated structures like SMC terms, particularly with the automatic solvers over (semi)rings provided by Rocq~\cite{ringtactic}.

\begin{lstlisting}[label=lst:tensordef, caption=Rocq definition of tensors]
Definition Tensor (R : Type) (n m : nat) (A : Type) := vec A n -> vec A m -> R.
Definition DimensionlessTensor (R : Type) (A : Type) := forall n m, Tensor R n m A.
\end{lstlisting}

To define tensors, we chose to use a shallow embedding using Rocq's functions directly, presented in \Cref{lst:tensordef}.
Because a tensor can be any function, we cannot perform any computation on its structure, but we obtain a sufficiently general definition to apply it to a variety of theories. 
To expose the compositional nature of tensors, we explicitly expose their sizes.
This ensures that contractions are well-typed, without having to provide any additional well-formedness conditions.
We found it necessary to work with ``dimensionless'' tensors parametric over size in certain contexts, such as the semantics of hypergraphs where edges may have any number of inputs or outputs.
Tensors with fixed number of indices can be naturally lifted to dimensionless tensors by taking them to be zero on improperly-sized arguments, so we do not lose any expressiveness.

In practice we require the base type $R$ to have an associated semiring structure, comprising associative, commutative, and distributive addition and multiplication operations, all with respect to some equivalence relation on $R$. 
The type $A$ is generally required to be \emph{summable}, meaning there is a designated list of elements of $A$ which can be summed over to perform a contraction.
When these structures are present on $R$ and $A$, we can define the product and contraction of tensors as usual.

\subsection{Hypergraphs}

To implement hypergraphs, we have to make some modifications to the usual mathematical definition to get things to work well in a proof assistant. 
The standard definition involves a set of edges and a set of vertices, and requires that edges reference only vertices in that designated set. 
In a proof assistant, sets of general objects are not well-behaved (without assuming axioms, it is necessary to consider sets up to some equivalence relation). 
Moreover, requiring edges' input and output vertices to be contained in a given set would entail carrying around a proof of that fact, either within the definition itself or as an additional predicate that the graph is well-formed. 
Instead, we can simply ensure that this condition is met by 
computing the set of vertices incident to any edge (including input and output edges for hypergraphs with interfaces).
This alone would be insufficient as hypergraphs can have isolated vertices. We store an additional field of \coqe{hypervertices} which is used to record extra vertices that may not be mentioned in the edges. 
Finally, we replace the set of edges with a finitely-supported map from positives to edges, as this makes it easier to reason about functions that combine multiple hypergraphs.
Edges have labels of a type \coqe{T} that is equipped with some equivalence relation, using the Rocq-std++ library's~\cite{stdpp} \coqe{Equiv} typeclass, which we will here denote by $\equiv_T$. 
This equivalence relation makes it possible to reason about more general theories in which generators may be equivalent without being literally equal (for example, if the generators are parameterized by Rocq's rational numbers).
We present our definition in \Cref{lst:hypergraph}.

\begin{lstlisting}[label=lst:hypergraph, caption=Hypergraph built with \coqe{Pset} and \coqe{Pmap} from stdpp]
Record HyperGraph {T} := mk_hg {
  (* The edges of the hypergraph *)
  hyperedges : Pmap (T * list positive * list positive);
  (* Additional vertices of the hypergraph, which are often
    disjoint from the referenced vertices of [hyperedges]
    (in practice, we only care about the subset of [hypervertices]
    not referenced in [hyperedges], but do not enforce disjointness) *)
  hypervertices : Pset;
}.
\end{lstlisting}

We extend this notion to a notion of hypergraphs with interfaces.
To more closely match the primary work behind this rewrite engine~\cite{bonchi2}, we use the term cospan in our library instead of interface.
In order to avoid bringing in unnecessary category theory definitions, we opted to use the term interfaces in this paper.
It should be noted that they are the same in practice; cospans just serve as a formalization of the notion of interfaces~\cite{bonchi2}.

\begin{lstlisting}[label=lst:cospanhypergraph, caption=Hypergraph with an interface (cospan) with \coqe{n} inputs and \coqe{m} outputs]
Record CospanHyperGraph {T : Type} {n m : nat} := mk_cohg {
  hedges : HyperGraph T;
  inputs : vec positive n;
  outputs : vec positive m;
}.
\end{lstlisting}

We implement our hypergraphs as a record containing the hyperedges and hypervertices of the graph, as described above.
We use the implementations of extensional maps and sets of positives from stdpp~\cite{stdpp}, giving us access to the standard operations. 
Hypergraphs with interfaces are defined as hypergraphs along with vectors specifying which vertices are the inputs and outputs.
We use explicit input and output sizes so that Rocq's typechecking can ensure that compositions are properly sized.


\wjbscomment{[[ Hypergraph translation to tensors ]]}
\wjbscomment{
- If we can turn edge labels into tensors, we can turn hypergraphs into tensors by taking products of the tensor corresponding to each edge - then internal vertices tell us what indices to sum over
}
To give tensor semantics to a hypergraph with interfaces, we require that the type \coqe{T} be interpretable as dimensionless tensors, as recorded by the \coqe{TensorLike} typeclass. 
This records that each element $t$ of $T$ corresponds to a tensor, and in such a way that equivalent elements of $T$ (by $\equiv_T$) are mapped to equivalent (dimensionless) tensors.
We use this representation rather than taking the edge labels to be \coqe{DimensionlessTensor}s so that our algorithms can compare edge labels effectively, as case analysis does not work on \coqe{DimensionlessTensor}s.
This is essential for reflective isomorphism testing, which underpins our automation.

\begin{lstlisting}[label=lst:tensorlike, caption=TensorLike typeclass modified from the original for readability]
Class TensorLike (R A T : Type) {Equiv T} := {
  interpretTensor (t : T) : DimensionlessTensor R A;
  interpretTensorProper :: Proper (equiv ==> equiv) interpretTensor
}.
\end{lstlisting}

These tensors give semantic interpretation to hypergraphs against which we can verify the correctness of our operations and relations.
Equivalence of tensor semantics is the ultimate notion of hypergraph equivalence that we consider, but the power of hypergraphs is that their \textit{syntactic} equivalence, isomorphism, can be effectively computed without explicitly translating them to tensors.
With our modified definition of hypergraphs with interfaces, the definition of isomorphism is slightly modified to accommodate the small difference in the edge map, vertex set, and relation on labels. 
Specifically, the injective functions $f_v$ and $f_e$ witness an isomorphism between the hypergraphs with interfaces $G$ and $H$ if:
\begin{enumerate}
    \item[a.] The interface (inputs and outputs) of $H$ is the application of $f_v$ to the interface of $G$
    \item[b.] There is an edge $(t,v,w)$ indexed by $k$ in $G$ if and only if there is an edge $(t',v',w')$ indexed by $f_e(k)$ such that $t\equiv_T t'$, $v' = f_v(v)$, and $w'=f_v(w)$
    \item[c.] The (computed) vertex set \coqe{vertices H} is the image of \coqe{vertices G} by $f_v$.
\end{enumerate}
We prove the induced relation of isomorphism is an equivalence relation on hypergraphs with interfaces, and moreover that isomorphic graphs have equivalent tensor semantics. 
This means semantic equalities can be proven by checking the syntactic condition of hypergraph 
isomorphism.



\wjbscomment{[[ Hypergraph isomorphism testing ]]}

Because we have defined hypergraphs computationally, we can implement a reflective partial decision procedure for isomorphism checking. 
This takes the form of an algorithm for finding isomorphisms along with a proof that this algorithm is correct, meaning that isomorphism checking reduces to simple computation with Rocq's reduction engines~\cite{vmcompute}.
Graph isomorphism can be computationally expensive to compute, but in this context it proves tractable. 
Because the hypergraphs have directed, labeled hyperedges, as well as interfaces, there is very little ambiguity as to what map could induce an isomorphism. 
In practice, we find that the graph isomorphism test scales well to moderately sized goals, but can become slow on very large graphs.
Empirically, it runs in less than a second on all the non-synthetic goals we encountered in our examples, which had sizes up to several dozen edges.

For an SMC term built up from generators by composition and stacking, we can represent the term by a hypergraph with interface, assuming we can represent the generators.
To use this representation in a proof assistant, we must verify the correctness of this translation; in \tensrocq this is with respect to the underlying tensor semantics.
For an SMC with semantics in tensors over some semiring, we can express that a hypergraph with interface represents a given SMC term by saying they have the same semantics as tensors.
This allows us to prove results in existing developments that admit tensor semantics, rather than just abstractly reasoning about graphs.

\subsection{PROP implementation}

To reason about SMC terms via computational reflection, we need a syntactic representation for those terms.
For simplicity, we consider single-sorted tensors, where each index has the same dimension. Hence, the only relevant type data is the number of inputs and outputs. For that reason, we restrict to SMCs whose objects are natural numbers, which in the literature are called PROduct categories with Permutations, or PROPs~\cite{maclane1965categorical}.
We will furthermore assume the existence of cap ($\cap$) and cup ($\cup$) constructors, which are present whenever a PROP is \textit{compact closed}. These allow us to reason about connectivity in a wider variety of domains. 
For instance, this means the equations given in \cref{dfn:autonomous} reduce to isomorphisms of hypergraphs, and therefore do not need to be explicitly encoded.
Moreover, cap and cup can be implemented as tensors, so we do not need an explicit cap and cup in the SMC under consideration.
Our SMC rewriting tactic does not produce caps and cups, so it can be used in an SMC without them. 


\begin{lstlisting}
Inductive APROP {T : Type} : nat -> nat -> Type :=
  | Aid n : APROP n n
  | Aswap n m : APROP (n + m) (m + n)
  | Acup n : APROP 0 (n + n)
  | Acap n : APROP (n + n) 0
  | Acompose {n m o} (ap1 : APROP n m) (ap2 : APROP m o) : APROP n o
  | Astack {n1 m1 n2 m2} (ap1 : APROP n1 m1) (ap2 : APROP n2 m2) :
    APROP (n1 + n2) (m1 + m2)
  | Agen (t : T) n m : APROP n m. (* Generators of the category, encoded using T *)
\end{lstlisting}

APROPs have natural semantics as tensors, assuming a \coqe{TensorLike} instance is present for \coqe{T}.
They can also be converted to hypergraphs with interfaces by direct computation, so we can use the fast isomorphism testing function on hypergraphs to prove equivalences of APROPs simply by evaluating the test on the hypergraph semantics of the APROP terms.

\begin{lstlisting}
Fixpoint APROP_graph_semantics {T n m} (ap : APROP T n m) : CospanHyperGraph T n m :=
  match ap with
  | Aid n => id_graph n
  | Aswap n m => swap_graph n m
  | Acup n => cup_graph n
  | Acap n => cap_graph n
  | Acompose ap1 ap2 =>
      compose_graphs (APROP_graph_semantics ap1) (APROP_graph_semantics ap2)
  | Astack ap1 ap2 =>
      stack_graphs (APROP_graph_semantics ap1) (APROP_graph_semantics ap2)
  | Agen t n m => graph_of_tensor t n m
  end.
\end{lstlisting}

To perform hypergraph rewriting on APROP terms, we need not only to convert APROPs to hypergraphs, but also to convert hypergraphs back to APROPs.
However, not every hypergraph represents a valid APROP term: for example, a non-monogamous hypergraph cannot be represented by an SMC~\cite{bonchi2}.
As such, a function from hypergraphs to APROP terms is necessary partial.
We implement such a partial function which attempts to convert a hypergraph into an SMC term by proceeding "left-to-right", i.e. from inputs to outputs, extracting hyperedges by horizontal composition. 
In the case of terms arising from an SMC, the resulting hypergraph is acyclic monogamous, and indeed any acyclic monogamous hypergraph can be represented as an SMC term, hence an APROP term.
Because double-pushout rewriting preserves acyclic monogamous hypergraphs, every hypergraph our tactic generates will be representable as an APROP term~\cite{bonchi2}.

\subsection{Rewriting}

We implement rewriting modulo SMC structure based on the double-pushout rewriting described in \cite{bonchi2}. 
In broad strokes, given a term $H$ and some known equivalence $L \equiv R$, to rewrite $L$ into $R$ within $H$ we must decompose $H$ into a form in which $L$ appears directly as a subterm. 
For acyclic monogamous cospans of hypergraphs, this decomposition (the pushout complement) can be computed by separating $H$ into edges with paths into $L$ and edges with paths out of $L$ (assuming $L$ is convex). 

We have several tactics for rewriting, depending on the specific context in which the rewriting occurs, but they have the same structure.
For simplicity, we will focus on the tactic which operates on APROPs directly; the tactics which apply to other theories differ only in adding additional steps on either end to first convert the goal into a statement about APROPs and at the end convert the result back to the target theory.
As such, the illustrative example is to suppose we have APROP terms $H$, $L$, and $R$, and there is a proof of $L \equiv R$ relative to the APROP tensor semantics.
First, we convert the target term $H$ and both sides of the equivalence $L \equiv R$ to hypergraphs $G_H$, $G_L$, and $G_R$, by direct computation.
We call an unverified matching function to try to find a subgraph of $G_H$ isomorphic to $G_L$; this function is a slight modification of the isomorphism test to remove the condition the mapping be surjective. 
Given a match is found, we call the \coqe{decompose} function to split $G_H$ into a composition  $G_{C_1} ; I \otimes G_{L'} ; G_{C_2}$. 
If matching and decomposition were successful, the hypergraph $G_{L'}$ should be isomorphic to $G_L$, and thus $G_H$ should be isomorphic to $G_{C_1} ; I \otimes G_{L} ; G_{C_2}$.
The tactic then attempts to convert $G_{C_1}$ and $G_{C_2}$ to APROP terms $C_1$ and $C_2$, which will always be possible when the original terms are SMC terms.
Therefore, we should have $H \equiv C_1 ; I \otimes L ; C_2$, which the tactic proves using an isomorphism test. 
Finally, we can replace $L$ with $R$ because composition and stacking preserve tensor semantics.

The correctness of this rewrite depends only on isomorphism testing and composition, with no assumption that the matching, decomposition, and most importantly term extraction from hypergraphs to APROPs are correct.
These functions are would be difficult to verify, and more importantly, omitting a verification requirement allows us to optimize performance and result quality.
This creates a separation of concerns: the rewrite engine can be implemented and improved quickly, with isomorphism testing ensuring correctness throughout.
However, requiring this additional check does incur a performance penalty.
Experimentally, the isomorphism check appears to take up to a third of the time of a rewrite, especially on larger goals. 
If matching, decomposition, and term extraction were verified, that check could be avoided, speeding up the tactic.

Allowing unverified rewriting engines also expands the scope of applications.
As we assume nothing about the matching algorithm, another tool could perform the matching, or even selection of lemma to rewrite, with the assurance that any rewrite performed is correct.

\section{Signature Reasoning}

There are two ways to use \tensrocq, \textit{theories} and \textit{instantiations}. Building a theory allows you to give a collection of generators and rewrite rules as an APROP which can then be used to build terms and perform rewriting. We call this axiomatized reasoning since it relates to small set of axioms rather than the underlying structures (this is not to be confused with axiomatization in Rocq). Instead, we define a structure called the Signature which allows the user to provide generators, an equivalence relation, and rewrite rules. 

\begin{lstlisting}
Structure Signature {A : Type} {SA : Summable A, EqA : EqDecision A} := {
  #[canonical=yes] gens : Type;
  #[canonical=no]  gens\_equiv :: Equiv gens;
  #[canonical=no]  gens_equivalence :: @Equivalence gens equiv;
  #[canonical=no]  rules : forall {n m}, relation (APROP gens n m);
}.
\end{lstlisting}

We reason about these rewrites using the same tensor semantics as the rest of the paper. However, we are not given a ring by the user. Instead we generate an appropriate polynomial ring to use our rewriting system. None of this is exposed to the user, and they are able to define a Signature and use our rewrite system immediately. To show this, we will take a look at a simple Frobenius algebra. A Frobenius algebra is an example of a PROP given by four generators
\[
m : 2 \to 1, \quad 
u : 0 \to 1, \quad
n : 1 \to 2, \quad
v : 1 \to 0.
\]
satisfying a small collection of rewrite rules given by
\[
    m * id ; m = id * m ; m, \quad
    u * id ; m = id, \quad
    id * u ; m = id,
\]\[
    n ; n * id = n ; id * n, \quad 
    n ; v * id = id, \quad
    n ; id * v = id,
\]\[
    n * id ; id * m = id * n ; m * id.
\]
The first three rules show $m$ and $u$ form a monoid, the second three show $n$ and $v$ form a comonoid. The final rule shows how the monoid given by $m$ and $u$ interacts with the monoid given by $n$ and $v$. There are two equivalent formulations of this condition.
\[
  n * id ; id * m = m ; n \quad
  id * n ; m * id = m ; n
\]
If $m$ and $n$ are both commutative, then these rules imply our previous rule. However if we have a non-commutative Frobenius algebra we end up needing both. We will now show how to define this Frobenius algebra in \tensrocq and derive the two other forms of the Frobenius condition.

First, we must define a relation with respect to which we will give our rules. We give one that simply states our terms share the same dimension, but any relation would do. We also provide notations for the generators beforehand to make writing the signature simpler.

\begin{lstlisting}
Notation "x ≡ y" :=
  (existT _ (existT _ (x%APROP, y%APROP)) :
    {n & {m & (APROP (fin 4) n m * APROP (fin 4) n m)%type}})
  (at level 70).

Notation m := (Agen (0%fin) 2 1).
Notation u := (Agen (1%fin) 0 1).
Notation n := (Agen (2%fin) 1 2).
Notation v := (Agen (3%fin) 1 0).

Definition Frob : Signature bool := {|
  gens := fin 4;
  gens_equiv := eq;
  rules := rules_of_rule_list [

    (* (m, u) forms a monoid *)
    (* rule assoc : *) m * Aid 1 ;' m ≡ Aid 1 * m ;' m ;
    (* rule unitL : *) u * Aid 1 ;' m ≡ Aid 1 ;
    (* rule unitR : *) Aid 1 * u ;' m ≡ Aid 1 ;
    
    (* (n, v) forms a comonoid *)
    (* rule coassoc : *) n ;' n * Aid 1 ≡ n ;' Aid 1 * n ;
    (* rule counitL : *) n ;' v * Aid 1 ≡ Aid 1 ;
    (* rule counitR : *) n ;' Aid 1 * v ≡ Aid 1 ;
    
    (* rule frob : *) n * Aid 1 ;' Aid 1 * m ≡ Aid 1 * n ;' m * Aid 1

  ];
|}.
\end{lstlisting}

Once we have defined our signature, we must lift these notions from the relation $\equiv$ specifying the rules of the signature to a relation $==$ expressing that terms have the same semantic interpretation in the generated semiring. 
This comprises restating each rules with this new relation, following a boilerplate procedure.

\begin{lstlisting}
Notation "x == y" :=
  (APROP\_semantics x ≡ APROP\_semantics y) (* Semantics into the generated semiring *)
  (at level 70).

(* (m, u) forms a monoid *)
Lemma assoc : m * id ;' m == id * m ;' m.
Proof. apply rules_hold. repeat constructor. Qed.
Lemma unitL : u * id ;' m == id.
Proof. apply rules_hold. repeat constructor. Qed.
Lemma unitR : id * u ;' m == id.
Proof. apply rules_hold. repeat constructor. Qed.

(* (n, v) forms a comonoid *)
Lemma coassoc : n ;' n * id == n ;' id * n.
Proof. apply rules_hold. repeat constructor. Qed.
Lemma counitL : n ;' v * id == id.
Proof. apply rules_hold. repeat constructor. Qed.
Lemma counitR : n ;' id * v == id.
Proof. apply rules_hold. repeat constructor. Qed.

Lemma frob : n * id ;' id * m == id * n ;' m * id.
Proof. apply rules_hold. repeat constructor. Qed.
\end{lstlisting}


This is also the point where we give names to our rules, as we are now lifting them to be used in Rocq proofs. The \coqe{rules_hold} lemma shows our equality holds for the generated semiring if that rule is present in the list given as a part of our APROP. Now that our rules are encoded, we can use the string rewriting tactic \coqe{srw} to perform string rewrites over Frobenius terms. Practically this means for Frobenius terms, we no longer have to care about associativity of terms but instead are able to reason as if we were looking at a string diagram. We now prove the first of our other two Frobenius rules.

\begin{lstlisting}
Lemma frobL : n * id ;' id * m == m ;' n.
Proof.
  transitivity (u * n * id ;' m * m)%APROP; 
  [srw unitL; smcat|].
  (* u * n * [[ id ]];' m * m == m;' n *)
  srw <- frob.
  (* u * [[ id 2 ]];' Aswap 2 1;' [[ id ]] * (n * [[ id ]];' [[ id ]] * m);' (Aswap 2 1;' m * [[ id ]];' sw) == m;' n *)
  srw assoc.
  (* u * [[ id 2 ]];' n * [[ id 2 ]];' [[ id ]] * ([[ id ]] * m;' m) == m;' n *)
  srw frob.
  (* m * u;' sw;' ([[ id ]] * n;' m * [[ id ]]) == m;' n *)
  srw unitL.
  (* m;' (n;' sw);' sw == m;' n *)
  smcat.
Qed.
\end{lstlisting}

Two tactics are present here, \coqe{smcat} which solves simple equations using hypergraph isomorphism. This tactic is sufficient to solve equations where the only difference is syntactic, up to connectivity. If the underlying hypergraphs are isomorphic, then \coqe{smcat} can solve the equation. The second tactic \coqe{srw} is the true rewriting tactic. It takes a term \coqe{lhs == rhs} and attempts to perform a replacement of the \coqe{lhs} term with the \coqe{rhs} term in the current goal as a hypergraph.

We can now use \coqe{frobL} in further rewrites. This means we can prove the other Frobenius rule with just two rewrites.

\begin{lstlisting}
Lemma frobR : id * n ;' m * id == m ;' n.
Proof.
  srw <- frob.
  (* n * [[ id ]];' [[ id ]] * m == m;' n *)
  srw frobL.
  (* m;' n == m;' n *)
  smcat.
Qed.
\end{lstlisting}

This part of \tensrocq is implemented in such a way to give us feature parity with Chyp. 
Much effort has been put towards making this as simple as possible, including automatically generating \coqe{APROP_semantics} and lemmas such as \coqe{rules_hold} to allow for simple boilerplate code in order to lift lemmas from the rules of the signature to Rocq's rewrite system.
Nevertheless, it is still somewhat inconvenient to include this boilerplate, and it is possible that it could be avoided with the development of a Rocq plugin.
We also currently require that all generators and relations of the signature are given at once, whereas Chyp allows them to be added throughout the development.
This difference arises primarily because we give semantic meaning to our rewrite system from the signature itself, so modifying the signature changes the semantics.

Our semantic backing allows us to prove our syntactic reasoning sound: Any semantics for APROPs over the signature's generators that satisfies the relations of the signature will also satisfy any relation we prove syntactically. 
This means that a theory can be developed once and applied to many different projects. 
However, this method of reasoning still requires using signatures, which is not quite suitable for reasoning about existing developments with their own notions of semantic correctness.
In the next section, we will discuss how we bridge the gap between our APROPs and  existing work, showing that \tensrocq is a tool for reasoning not just about the theory of SMCs, but also concrete instances of SMCs, using powerful diagrammatic rewriting.

\section{Instantiated Reasoning: The ZX Calculus}

A major advantage of working in a proof assistant is that \tensrocq can make statements grounded in mathematical descriptions of semantics, not just abstract rewrite systems. 
To apply \tensrocq reasoning to other projects, we provide a framework to apply its tactics to any goal that can be framed using tensor semantics.

As our rewrite engine is based on the APROP structure, it is necessary to frame goals in that context. 
However, a vast number of theories can be described as symmetric monoidal categories, and we do not wish to constrain projects to using the APROP type directly\footnote{For instance, the theory of rings can be described as an SMC with generators for $0$, $1$, $+$, and $\ast$ and SMC equations representing the axioms. 
}
Therefore, we provide a system of typeclasses that can be instantiated to enable \tensrocq to reason about an existing project.

The general process by which our rewrite engine can be used on a target SMC is as follows:
\begin{enumerate}
    \item Specify an APROP representing the theory by defining generators and their semantics.
    \item Define tensor semantics for the target theory, and show that terms with equal tensor semantics are equivalent in the theory.
    \item Provide the definitions of sequential and parallel composition in the target theory via \coqe{APROPlike}, and prove their tensor semantics are the expected contraction and product.
    \item Instantiate typeclasses describing how to convert a term in the target theory to an APROP with equivalent tensor semantics (which we call ``quoting'' the term), and how to convert an APROP to a term in the target theory (which we call ``denoting'' the APROP).
    \item Instantiate our rewrite tactics with these interfaces. 
\end{enumerate}

This system is designed to make minimal expectations of the target theory.
We do not require proofs that every term is represented by an APROP, nor that every APROP represents a term; the tactics will only work in contexts where the relevant terms can be represented by APROPs.
Providing quotation and denotation by user-instantiated typeclasses means tactics can be used in contexts with derived terms that are not written directly as generators, increasing their flexibility. 
These quotation and denotation processes can also be extended with new terms throughout the development of the project without needing to redefine the APROP interpretation.

For a practical example, we describe the application of \tensrocq to an existing project named \vyzx~\cite{lehmann2026vyzxformalverificationgraphical}. 
\vyzx is a formally verified implementation of \zxcalc, a symmetric monoidal category and diagrammatic calculus used for quantum computing.
To use our rewrite engine, we first have to have an idea of the theory we are targeting. In the case of \vyzx, \zxdiags are built as an inductive datastructure.

\begin{lstlisting}[caption=CoreData/ZXCore.v in the VyZX repository]
Inductive ZX : nat -> nat -> Type :=
  | Empty : ZX 0 0
  | Cup  : ZX 0 2
  | Cap  : ZX 2 0
  | Swap : ZX 2 2
  | Wire : ZX 1 1
  | Box  : ZX 1 1
  | X_Spider n m (α : R) : ZX n m
  | Z_Spider n m (α : R) : ZX n m
  | Stack {n_0 m_0 n_1 m_1} (zx0 : ZX n_0 m_0) (zx1 : ZX n_1 m_1) : 
          ZX (n_0 + n_1) (m_0 + m_1)
  | Compose {n m o} (zx0 : ZX n m) (zx1 : ZX m o) : ZX n o.
\end{lstlisting}

Equivalence on \vyzx diagrams is typically given as \textit{proportionality} of semantics by a constant, nonzero complex multiple \coqe{c : C}, denoted in \vyzx by the relation $\propto\![c]$.
However, the definition of equivalence corresponding directly to the equality of tensor semantics is $\propto=$, which stands for $\propto\![1]$, i.e. semantic equality.
For this instantiation, we chose to only work with $\propto=$, as \vyzx provides tools to automatically convert statements of proportionality to statements of semantic equality by adding a scalor factor encoded as a \zxdiag. 
To represent \zxdiags, our APROP type is chosen to be \coqe{option (bool * R + C)}.
The value \coqe{None} stands for the Hadamard box, \coqe{Box}, while the \coqe{C} represents a scalar factor via \coqe{zx_of_const}. The values \coqe{(false,\alpha)} and \coqe{(true,\alpha)} represent the \coqe{Z_Spider} and \coqe{X_Spider}, respectively, with phase \coqe{\alpha}.
These generators are sufficient to represent all other definitions of \vyzx. 
We then provide tensor semantics for this APROP by giving the standard tensor interpretations of each generator, completing step (1).

The semantics of \vyzx diagrams are given as \qlib matrices~\cite{QuantumLib}, which can be easily converted to and from tensors.
Therefore, we can easily show that \zxdiags with equivalent tensor semantics have equivalent matrix semantics, which is step (2).
Providing the composition and stacking of \coqe{ZX} completes step (3).

\begin{lstlisting}[caption=examples/VyZXCRewriting.v, label=lst:zxaproplike]
#[refine] Instance ZX_APROPlike : APROPlike C bool ZX (@Compose) (@Stack) := {
  interpretDiagram n m zx := ZX_tensor_semantics zx;
}.
Proof.
  abstract (intros n m d d' Heq%matrix_of_tensor_of_equiv;
    rewrite 2 ZX_tensor_semantics_correct in Heq;
    prep_matrix_equivalence;
    exact Heq).
  abstract (easy).
  abstract (easy).
Defined.  
\end{lstlisting}

Once we have built an APROPLike for our target theory, we must be able to quote between our APROP and the target theory to allow translation.
This is accomplished through mostly boilerplate code using the DiagramQuote typeclass. 
The statement 
\begin{coq}
DiagramQuote (APROPlikeD:=APROPModel_for_theory) (TensT:=APROP_tensor_interpretation) d a.    
\end{coq}
says that the diagram \coqe{d} in our target theory is semantically equivalent to the term \coqe{a} in our APROP.

\begin{lstlisting}
Local Notation Quote := (DiagramQuote (APROPlikeD:=ZX_APROPlike) (TensT:=ZXCCALC)).
\end{lstlisting}

For example, we can take a term \coqe{n_wire n} and the term \coqe{Aid n} and show they are semantically equivalent with the instance \coqe{zx_quote_n_wire_n}.
\begin{lstlisting}
#[export] Instance zx_quote_n_wire n : Quote (n_wire n) (Aid n).
Proof.
  constructor.
  apply matrix_of_tensor_inj.
  rewrite ZX_tensor_semantics_correct.
  rewrite matrix_of_tensor_delta.
  now rewrite n_wire_semantics.
Qed.
\end{lstlisting}

Implementing a quotation instance for all of the basic structures allows us to use typeclass resolution to translate terms in our target theory into APROPs.
Other terms can be given quotations later as desired, if new derived constants are defined.
To make typeclass resolution behave well\footnote{Specifically, we want \texttt{DiagramQuote} to examine the concrete term and try to generate an APROP, while \texttt{DiagramDenote} should examine the APROP to generate a concrete term. Combining the two can lead to major performance issues, and gives less control over the generated terms.}, we must separately specify the other conversion, from APROP to \zxdiag.
For this we use the typeclass \coqe{DiagramDenote}, which follows the same pattern of boilerplate.
Note that this typeclass has precisely the same meaning as \coqe{DiagramQuote}, so when a term is quoted and denoted identically, the proofs of the corresponding instances can be repeated verbatim.

\begin{lstlisting}
Local Notation Quote := (DiagramDenote (APROPlikeD:=ZX_APROPlike) (TensT:=ZXCCALC)).
\end{lstlisting}

\begin{lstlisting}
#[export] Instance zx_denote_n_wire n : Quote (n_wire n) (Aid n).
Proof.
  constructor.
  apply matrix_of_tensor_inj.
  rewrite ZX_tensor_semantics_correct.
  rewrite matrix_of_tensor_delta.
  now rewrite n_wire_semantics.
Qed.
\end{lstlisting}

Once the quoting and denoting of step (4) are finished, we can now translate automatically between APROP terms and terms of our target theory.
The last step needed is to instantiate new rewrite tactics for this target theory which use these instances to move freely between our APROPs and the target terms, so that the tactics can determine the proper instances to apply.

\begin{lstlisting}[caption=Rewriting tactics lifted to the VyZX grammar, label=lst:rewrites]
(* Prove that ZX terms corresponding to isomorphic hypergraphs are \propto= *)
Ltac zxcat := wild_cat ZX_APROPlike ZXCCALC.

(* Simplify the LHS by removing extraneous identities *)
Ltac zxclean_lhs := wild_clean_lhs ZX_APROPlike ZXCCALC.

(* Simplify the RHS by removing extraneous identities *)
Ltac zxclean_rhs := wild_clean_rhs ZX_APROPlike ZXCCALC.

(* Simplify both sides of the goal by removing extraneous identities *)
Ltac zxclean := wild_clean ZX_APROPlike ZXCCALC.

(* Rewrite [lem] at occurence number [match_num] in the LHS, up to SMC equivalence *)
Ltac zxrw_lhs lem match_num := wild_rw_lhs ZX_APROPlike ZXCCALC lem match_num.

(* Rewrite [lem] at occurence number [match_num] in the RHS, up to SMC equivalence *)
Ltac zxrw_rhs lem match_num := wild_rw_rhs ZX_APROPlike ZXCCALC lem match_num.

(* Rewrite [lem] at occurence number [match_num] in the goal, up to SMC equivalence *)
Ltac zxrw lem match_num := wild_rw ZX_APROPlike ZXCCALC lem match_num.
\end{lstlisting}

With that, \coqe{zxrw} can now be used in proofs in our target theory. To show the power of this rewriting tactic, we take a look at an example proof from \vyzx,  which shows that three CNOT gates implement a swap operation: \coqe{Lemma _3_cnot_swap_is_swap : _CNOT_ ⟷ _NOTC_ ⟷ _CNOT_ ∝[/ (2 * √2)] ⨉.}~\cite{lehmann2026vyzxformalverificationgraphical}. In \vyzx, this proof took 45 lines of code.
A large majority of these lines are dedicated to associating terms to expose a subdiagram exactly, so that Rocq's \coqe{setoid_rewrite} tactic can perform the rewrite.
This reassociation code can be annoying to write, and it is very fragile, as it depends intimately on the term structure of the goal.
Using our rewrite tactic, this is reduced to only 17 lines of proof, 5 of which have to do with the specifics of how we integrate with \vyzx's $\propto=$ relation. 
Moreover, our rewrites depend only on the hypergraph interpretations, so are much more resilient to any changes in definition.

We first show that two consecutive \coqe{_CNOT_} gates (in \vyzx, \coqe{Z 1 2 0 * id ; id * X 2 1 0}) cancel:

\begin{lstlisting}
Lemma _CNOT_CNOT_cancels : _CNOT_ ⟷ _CNOT_ ∝[/ 2] n_wire 2.
Proof.
  apply prop_by_iff_zx_scale. 
                      (* Change relation to ∝= by adding in scalar ZX diagram *)
  split. 2:{ apply nonzero_div_nonzero; nonzero. }                                     
                                        (* Show above operation is reasonable *)
  zxrw (@dominated_Z_spider_fusion_top_left 2 0 1 1 0 0). 
                                                    (* Fuse the top Z spiders *)
  zxrw (@dominated_X_spider_fusion_bot_right 2 0 1 1 0 0). 
                                                 (* Fuse the bottom X spiders *)
  rewrite Rplus_0_l.                                     (* Correct rotations *)
  zxrw (to_gadget hopf_rule_Z_X_vert 1 1 1 1 0 0 eq_refl). 
                       (* Remove 2 connections between top and bottom spiders *)
  rewrite Z_is_wire, X_0_is_wire.                       (* Now they are wires *)
  zxcat.                   (* Equivalent because connections are all the same *)
Qed.
\end{lstlisting}

And then we can show that three alternating CNOTs form an identity:

\begin{lstlisting}
Lemma _3_cnot_swap_is_swap : _3_CNOT_SWAP_ ∝[/ (2 * √2)] ⨉.
                           (* _CNOT_ ; _NOTC_ ; _CNOT_ = sw *)
Proof.
  apply prop_by_iff_zx_scale. 
                        (* Change relation to ∝= by adding scalar ZX diagrams *)
  split. 2:{ apply nonzero_div_nonzero, Cmult_neq_0; nonzero. }                                 
                                    (* Show the above operation is reasonable *)
  rewrite cnot_is_swapp_notc at 2.     (* Add in a swap and change the third  *)
  rewrite notc_is_notc_r.                     (* Move the cnot back and forth *)
  zxrw (to_gadget bi_algebra_rule_X_over_Z). 
                        (* Find and automatically perform a bialgebra rewrite *)
  zxrw (to_gadget _CNOT_CNOT_cancels).    (* Cancel the two CNOTs that result *)
  zxrw (symmetry (zx_of_const_mult (/ C2) (/ √ 2))).      (* Fix up constants *)
  rewrite Cinv_mult_distr.                       (* More constant corrections *)
  zxcat.                                             (* Graphs are equivalent *)
Qed.
\end{lstlisting}

These proofs correspond more closely to the diagrammatic version of these proofs than the original \vyzx proofs, as they emphasize diagrammatic transformations and suppress term-level associativity concerns.
We did find some limitations with using our rewrite engine on \vyzx, namely to deal with caps and cups. 
Because we chose to implement caps and cups within the APROP structure, we are able to reason about isomorphism up to the yanking equations. 
However, the matching and decomposition functions currently target SMC terms exclusively, and do not support caps and cups.
Moreover, the ``spiders'' that serve as generators for the \zxcalc have very strong permutative properties, with the ability to convert between inputs and outputs, as well as absorb permutations into the generators. 
\tensrocq makes reasoning about these permutations easier, as yanking can be inferred, but it is still necessary to manually state the transformations to apply to spiders to make them isomorphic as hypergraphs.
Encoding this permutativity directly could allow for much stronger isomorphism checking, or even rewriting, up to the full notion of only connectivity matters for the \zxcalc.

\section{Future Work}

The primary gap that remains between this tool and existing work in the area is visualization. Chyp~\cite{chyp} has an integrated visualizer which allows for the easy identification of which diagrammatic rules to apply. As TensorRocq is embedded within Rocq, visualization and interactivity are more difficult to implement. Existing work from the \vyzx project~\cite{lehmann2026vyzxformalverificationgraphical} shows how we can interact with the language server protocol (LSP) to generate visualizations, but it remains to be seen how interactivity could work with our embedded rewrite engine and language. Recent work by Damien Pous~\cite{pous2026string} shows that it is possible to copy Rocq code to an external visualizer, perform graphical rewrites there, and extract the proofs back to Rocq. Integrating this with Rocq-LSP would give the best of both worlds in terms of allowing easy visualization and rewriting. 

Further extensions and improvements of the rewrite engine, including a full verification of the system remain open problems. The theoretical backing of the engine is strong and we have found it can be applied to existing projects such as \vyzx. Concrete diagrams where we have fixed dimension arguments are sufficient for many applications, including circuits where the number of inputs or outputs to a process is fixed.
However, adding the ability to reason over generators with a variable number of inputs or outputs would be a major improvement. This is feasible with the use of sized hypergraphs, hypergraphs with a notion of ``size'' for each vertex, with each vertex essentially representing a bundle of hypergraph vertices. These sizes can then be represented syntactically, allowing for the underlying hypergraph to be of finite size and therefore computable.

Another extension is the ability to match on parametric generators, such as the phases of ZX spiders, possibly up to some equivalence relation. This comprises extending the matching algorithm to track the assignment of parameters for each match and incorporating this within the tactic infrastructure to automatically determine the parameters associated with a given rewrite rule.

The checked-rewrite structure of our rewrite engine also makes it much easier to extend to categories with additional properties. For instance, currently we can only reason about caps and cups for the purposes of isomorphism, not rewriting, as the matching engine is based on decomposition for a symmetric monoidal category. Implementing matching and decomposition procedures would make the rewrite engine significantly more powerful in domains with caps and cups. In principle, it should also be possible to encode other structural conditions, such as the permutativity of the Z and X spiders in the \zxcalc, to enable rewriting up to these more flexible notions of connectivity.

Further formalisation of more general PROPs as abstract tensor systems~\cite{Kissinger2014} could allow one to reason about a broader class of theories than those with a concrete tensor semantics. However, for PROPs that do not form hypergraph categories, some care must be taken to maintain well-formedness properties such as monogamy, acyclicity, and convexity~\cite{bonchi1} at the level of abstract tensors.


We also hope to see this rewrite tactic applied to new domains. The ability to ignore associativity is an invaluable tool for rewriting. There are a several types of project for which \tensrocq is a natural fit. Using \tensrocq within linear algebra libraries such as the ones found in mathematical components~\cite{mahboubi2021mathematical} would be quite straightforward, as not only are matrices are naturally interpreted as tensors, but also as PROPs. We have explored one quantum computing library, \vyzx, but other libraries like Qbricks~\cite{chareton2021qbricks} and SQIR~\cite{hietala2021sqir}, provide different symmetric monoidal structures that could amenable to verification in \tensrocq.
We also hope that the availability of effective tooling for SMCs will encourage the development of more formalizations.

\section{Conclusion}

\tensrocq closes the gap between on-paper proof and formal verification for symmetric monoidal categories.
We enable true diagrammatic reasoning, not only proving structural equivalences automatically, but enabling automatic, verified rewriting modulo SMC equivalence.
This allows SMC terms to be thought of as diagrams, with associativity handled being automatically by powerful, efficient automation. 
We have also demonstrated how our extensible framework can be applied to existing projects, like the \vyzx \zxcalc library~\cite{lehmann2026vyzxformalverificationgraphical}. 

Diagrammatic rewriting makes proofs much shorter and more readable than the previously-necessary manual manipulation of associativity.
\tensrocq's diagrammatic rewriting is verified with respect to tensor semantics, so it can be used soundly in any theory interpretable within tensors.
We also provide a structure to define abstract rewriting systems by generators and relations, in the style of Chyp~\cite{chyp}.
These abstract systems can also be instantiated with concrete implementations, and we prove that equations proved in the abstract system hold in the concrete instantiation.

\begin{acks}
This material is based upon work supported by the Air Force Office of Scientific Research under award numbers FA95502310361 and FA95502310406. This work is also supported by the Engineering and Physical Sciences Research Council grant reference EP/Z002230/1: (De)constructing quantum software (DeQS).
\end{acks}


\bibliography{references}

\end{document}